\def\full{1}% set to 1 to get full version; use \ifFull{} or \ifFullElse{}{} for conditional text ...
\newcommand{\ifFull}[1]{#1}
\newcommand{\ifFullElse}[2]{#1}
\newcommand{\ifFull}[1]{}
\newcommand{\ifFullElse}[2]{#2}
	\newtheorem{theorem}{Theorem}
	\newtheorem{lemma}{Lemma}
	\newtheorem{definition}{Definition}
	\newtheorem{corollary}{Corollary}
	\newtheorem{claim}[theorem]{Claim}
	\newtheorem{remark}[theorem]{Remark}
\newtheorem{assumption}{Assumption}
\newenvironment{claimproof}{
		\begin{proof}[of Claim.]
	}{
		%\renewcommand\qedsymbol{$\blacksquare$}
		%\qedsymbol
		\qed
		\end{proof}
	}
\newcommand{\oned}{M^{1}}
\newcommand{\md}{M^{m}}
\newcommand{\ignore}[1]{}
\begin{document}

\title{One-dimensional vs. Multi-dimensional Pricing \\ in Blockchain Protocols}

\ifFullElse{ 
		\author{
		Aggelos Kiayias \\
		University of Edinburgh \& IOG\\ 
		\small{\tt{akiayias@inf.ed.ac.uk}}
		\and
		$\mbox{Elias Koutsoupias}$\\
		University of Oxford\\ 
		\small{\tt{elias@cs.ox.ac.uk}}	
		\and
		$\mbox{Giorgos Panagiotakos}$\\
		IOG	\\
		\small{\tt{giorgos.panagiotakos@iohk.io}}
		\and
		Kyriaki Zioga\\
		National Technical University of Athens\\
		\small{\tt kyriaki.zioga@gmail.com}	
	}	
}
{
		\author{
			Aggelos Kiayias\inst{1} \and
			Elias Koutsoupias\inst{2} \and
			Giorgos Panagiotakos\inst{3} \and
			Kyriaki Zioga\inst{4}
		}
		
	\institute{
		University of Edinburgh \& IOG, UK,
		\email{akiayias@inf.ed.ac.uk}
		\and
		University of Oxford,
		\email{elias@cs.ox.ac.uk}
		\and	
		IOG, Greece, 
		\email{giorgos.panagiotakos@iohk.io}
		\and 
		National Technical University of Athens,
		\email{kyriaki.zioga@gmail.com}
	}
}

\maketitle

\begin{abstract}
 
  Blockchain transactions consume diverse resources, foremost among
  them storage, but also computation, communication, and
  others. Efficiently charging for these resources is crucial for
  effective system resource allocation and long-term economic
  viability. The prevailing approach, one-dimensional pricing, sets a
  single price for a linear combination of resources. However, this
  often leads to under-utilization when resource capacities are
  limited. Multi-dimensional pricing, which independently prices each
  resource, offers an alternative but presents challenges in price
  discovery. 
  
  This work focuses on the welfare achieved by these two
  schemes. We prove that multi-dimensional pricing is superior under
  stable blockchain conditions. Conversely, we show that one-dimensional
  pricing outperforms its multi-dimensional counterpart in transient
  states, exhibiting faster convergence and greater computational
  tractability. These results highlight a critical trade-off: while
  multi-dimensional pricing offers efficiency gains at equilibrium,
  its implementation incurs costs associated with system
  transitions. Our findings underscore the necessity for a deeper
  theoretical understanding of these transient effects before
  widespread adoption. Finally, we propose mechanisms that aim to mitigate
  some of these issues, paving the way for future research.
 
\end{abstract}

%\tableofcontents

\section{Introduction}

Blockchain ledgers, by their very nature, operate under strict resource constraints. The decentralized and replicated design of these systems means every transaction included in a block must be processed, validated, and %permanently 
stored by all participating nodes. This distributed overhead, encompassing \emph{storage, computation, and communication}, inherently limits the number of transactions that can be effectively processed within a given timeframe. Consequently, blockchain protocols impose restrictions on how blocks are assembled, leading to a setting where \emph{only a fraction of pending transactions can be admitted into the ledger at any given time}.

To ensure efficient resource allocation and prioritize the most economically valuable transactions, they must be suitably priced to offset the costs incurred by their processing and inclusion. In simpler blockchain architectures, such as Bitcoin, this pricing mechanism is largely \emph{single-resource based}. Transaction inclusion is primarily determined by a ``feerate'', defined as the fee paid per byte (or virtual byte) of transaction data.
% \cite{bitcoin-feerate}. 
Transactions offering a higher feerate are preferentially included, as Bitcoin blocks have an upper bound on ``virtual bytes''. This effectively treats storage as the primary, if not sole, bottleneck resource.

However, as blockchain platforms evolve to support more complex functionalities, the single-resource pricing model becomes increasingly inadequate. Blockchains offering extensive smart contract capabilities, like Ethereum, must account for a wider array of resources beyond mere data storage, including computational steps and memory access. Ethereum, for instance, introduced ``gas'' as a global unit of account, pricing all operations of its virtual machine in terms of the total gas they consume.  While a step towards recognizing diverse resource demands, this still aggregates all resource consumption into a single ``gas'' dimension.

This prevailing approach, which sets a single price for a linear combination of resources, can be characterized as ``\emph{one-dimensional pricing}''. Yet, it is increasingly evident that transactions on sophisticated virtual machines may have distinct and non-linearly related requirements across multiple resource types, such as storage, computational steps, and network bandwidth. This inherent multi-dimensionality of resource consumption naturally suggests that a ``\emph{multi-dimensional}'' fee approach, where each resource is independently priced, might be more appropriate. Recent developments in systems like Ethereum, such as EIP-4844, further underscore this shift by introducing new distinct resource types, like ``blobs'', in addition to the existing ``gas'' mechanism. Blobs are designed for temporary, high-throughput data storage, distinct from the more permanent storage associated with typical transaction data, highlighting the growing recognition of different resource types with unique pricing considerations.

The intuitive appeal of multi-dimensional pricing lies in its potential for more efficient resource allocation, allowing for finer-grained control over demand for specific, constrained resources. However, despite these theoretical advantages, one-dimensional pricing remains the dominant paradigm in current blockchain implementations. This discrepancy points to a critical trade-off that remains not fully understood, despite the recent expansion of literature focusing on improving multi-dimensional pricing~\cite{diamandis2023designing,angeris2024multidimensional,crapis2024optimal,heimbach2025early,lavee2025does}.

\medskip 

This paper focuses on this fundamental trade-off, comparing the welfare achieved by one-dimensional and multi-dimensional pricing schemes for blockchain transactions. We demonstrate that, under stable blockchain conditions and at equilibrium, multi-dimensional pricing theoretically outperforms its one-dimensional counterpart, leading to superior resource utilization and overall welfare. Conversely, we show that in transient states, one-dimensional pricing exhibits faster convergence and greater computational tractability, making it more robust during periods of rapid change. Furthermore, fast convergence may lead to higher welfare as it can reduce the loss of welfare during the suboptimal pricing of the transient phase. These findings highlight a crucial dilemma: while multi-dimensional pricing offers clear efficiency gains at equilibrium, its widespread adoption necessitates a deeper theoretical understanding of the costs associated with system transitions and the challenges of price discovery in a dynamic environment.

\medskip

\noindent
\textbf{Our Results.}
We conduct a comparative investigation of one-dimensional and multi-dimensional transaction fee systems, highlighting their performance in welfare optimization and dynamic behavior.

\begin{itemize}
	\item 
%\noindent
\emph{Welfare Optimization.}
We prove that under stable blockchain conditions and at equilibrium, multi-dimensional pricing achieves superior welfare compared to its one-dimensional counterpart (Theorem~\ref{thm:welfare}). This rigorous finding stems from its ability to more efficiently accommodate 
resource consumption within the same space.
%a diverse range of valuable transactions.

%\noindent
\item \emph{Stabilization Time.}
Conversely, we demonstrate that in transient states, one-dimensional pricing significantly outperforms multi-dimensional pricing in terms of convergence. Specifically, we quantify its faster convergence to the steady state (Theorem~\ref{thm:ratio_convergence}).

%\noindent
\item \emph{Computational Tractability.}
In the context of transient states, we also show that one-dimensional pricing offers greater computational tractability for block producers
when the system is under congestion. We establish this by demonstrating that the relevant revenue maximization problem in that case  is considerably more complex under multi-dimensional pricing, reducing a multi-dimensional variant of the knapsack problem to highlight its (fine-grained) intractability (Theorem~\ref{thm:values}).
\end{itemize}

Our work thus rigorously verifies and quantifies key tradeoffs between these pricing schemes. These results underscore the need for further research into alleviating these tradeoffs, specifically through improved price discovery algorithms for multi-dimensional systems or more adaptive weighting mechanisms for one-dimensional schemes (as discussed further in the Conclusions section).

\paragraph{Related work.} In~\cite{diamandis2023designing}, the authors describe a novel price-update procedure
for the multidimensional mechanism, and show that it maximizes a utility function that takes in account both
the users and the miners as well as other network/protocol related objectives.
% show how to compute dynamic prices that alignthe network’s incentives with those of the users and miners as demand for different resources changes,
Further, in a follow-up~\cite{angeris2024multidimensional}, they show that this price-update procedure
achieves similar performance to choosing optimal fixed prices offline, i.e., after seeing future traffic.
%average welfare-gap between the related offline and online problems. 
Both works focus on analyzing \emph{only}  multidimensional 
mechanisms. In \cite{crapis2024optimal}, the authors develop a \emph{practical} framework for dynamically pricing multiple blockchain 
resources using linear-quadratic control theory. 
Further, \cite{heimbach2025early}, provides a statistical analysis of Ethereum's multidimensional fee market since EIP-4844's implementation, i.e.,
the introduction of a second synthetic resource, blobs.

The concurrent work of \emph{Lavee et al.}~\cite{lavee2025does} is the one that is most related to ours.
In this paper, the authors study approximately how much more throughput is needed by the one-dimensional
mechanism in order to be able to accommodate any allocation that is feasible in the multi-dimensional mechanism.
They show that calculating the throughput lost by the single-dimension mechanism, is equal to the value of a specific zero-sum game.
Further,  they study multidimensional mechanisms where the
number of dimensions is less than the total number of physical resources.
It is shown that computing the optimal number of dimensions in that case that ensures a specific upper-bound 
on throughput loss is computationally hard.
The authors seem to arrive to similar conclusions as our work, i.e., that is handling 
multi-dimensional constrains introduces computational difficulties.
However, unlike our work, they refrain from exploring welfare or transient states of 
the two mechanisms.

\paragraph{Organization.} In Section~\ref{sec:prelim} we formally introduce the single and multi-dimension mechanisms,
Section~\ref{sec:welfare} explores the welfare generated by the two mechanisms in the stable state.
Next, we turn our attention to transient states.
In Section~\ref{sec:stability}, we study the time to stability of the two mechanisms,
while in Section~\ref{sec:revenue} we identify a computational bottleneck of the multidimensional mechanism
when the system is under congestion.
Finally, in Section~\ref{sec:proposals}, two approaches for side-stepping the limitations outlined above are explored.
\ifFullElse{}{Due to space constraints, some of the proofs are presented in full in the Appendix.}

\vspace{0.5cm}

%Next, we will focus on computational aspects of multi-dimensional pricing.

\section{Preliminaries}\label{sec:prelim}

\ignore{
\[
\begin{array}{|c|c|}
	\hline
	\textbf{Variable} & \textbf{Meaning} \\
	\hline
	b_j & \text{bid of transaction } j\\
	c^j = (c_1^j, ...,c_m^j) & \text{consumption vector of transaction }j \\
	g_j & \text{gas consumption of transaction }j \\
	B_t & \text{block at time t} \\
	\oned & \text{1-dimensional pricing mechanism} \\
	\md   & m\text{-dimensional pricing mechanism} \\
	G & \text{upper bound on gas consumption} \\
	T & \text{target} \\
	r & \text{base fee for gas} \\

	G_i & \text{upper bound of resource }i \\
	T_i & \text{target of resource }i \\
	r_i & \text{base fee for resource i} \\
	W_t^1 & \text{welfare of the one-dimensional pricing mechanism at time t}\\
	W_t^m & \text{welfare of the m-dimensional pricing mechanism at time t}\\
	\hline
\end{array}
\]
}

%\subsection{Model}

\subsection{Blockchain protocols and TFMs}
We are going to analyze transaction fee mechanisms (TFMs) run on top of a protocol that maintains a blockchain. Our formalization of TFMs  closely follows that of~\cite{roughgarden2024transaction}.

Each block $B_i$ in a chain of blocks $(B_i)_i$ consists of a list of transactions. 
Each transaction $tx_j$ is associated with: (i) a resource consumption vector $c^j = (c_1^j, c_2^j,\ldots, c_m^j)$, where  $c_i^j$ represents the consumption of resource $i$ by the transaction, and $m$ denotes the total number of resources consumed by the system, e.g., cpu or memory units, 
and (ii) the (private) value $v_j$ of $tx_j$ for the user submitting as well as a bid $b_j$ representing the user's willingness to pay.

%, and a bid $b_j$,  
%which corresponds to the total price that the creator actually offers to pay for its transaction, and with a tip $tip_j$ that denotes the incentive the user provides to the miner for including the transaction in the block. 
 % and a value $v_j$,
%For simplicity, we assume that $b_j$ and $v_j$ are equal, i.e., users are truthful and bid their value. Therefore, we use these terms interchangeably.

The first thing any blockchain protocol has to specify is an allocation rule
%Any blockchain protocol defines as a pair of algorithms $(M,V)$. 
%$M$ 
that takes as input a blockchain and a set of transactions,
and decides a subset of them--the ones that will be included
in the next block.
%, as well as other things like the amount each transaction’s creator must pay, and where the payment is directed; 
Candidate transactions for inclusion are maintained by the block producer in
the \emph{mempool}, 
%block producers maintain a mempool which contains all pending transactions, 
denoted here by $in$. 

\begin{definition}[Allocation rule]
An allocation rule is a vector-valued function $x$ from the on-chain history $(B_i)_{i \in [k-1]}$ and mempool $in_k$ to a 0-1 value $x_j((B_i)_{i \in [k-1]},in_k)$ for each pending transaction $tx_j \in in_k$. 
\end{definition}
A value of 1 for $x_j((B_i)_{i \in [k-1]}, in_k)$ indicates transaction $tx_j$’s inclusion in the current block $B_k$; a value of 0 indicates its exclusion.
Block producers can choose a different subset of transactions to include in their block than the ones produced by $x$ reflecting their strategy.

%and assume that it has enough space to accommodate them all. 

%Furthermore, certain constraints should be satisfied regarding resource consumption, as otherwise a malicious block producer may launch DoS attacks.

For the specific TFMs we consider in this work, the protocol has to further decide how much the creators of transactions included in the blockchain have to pay, and to whom their payment is directed. These decisions are formalized by two more functions: a payment rule $p$ and a burning rule $q$. 

\begin{definition}[Payment rule]
A payment rule is a function $p$ from the current on-chain history $(B_i)_{i \in [k-1]}$ and transactions $B_k$ included in the current block to a nonnegative number $p_j((B_i)_{i \in [k]})$ for each included transaction $tx_j \in B_k$.
\end{definition}
The value of $p_j((B_i)_{i \in [k]})$ indicates the payment from the creator of an included transaction $tx_j \in B_k$ to the miner of the block $B_k$.

\begin{definition}[Burning rule]
A burning rule is a function $q$ from the current on-chain history $(B_i)_{i \in [k-1]}$ and transactions $B_k$ included in the current block to a nonnegative number $q_j((B_i)_{i \in [k]})$ for each included transaction $tx_j \in B_k$.
\end{definition}
The value of $q_j((B_i)_{i \in [k]})$ indicates the amount  burned by the creator of an included transaction $tx_j \in B_k$.

Given $x$, $p$, and $q$, the process of chain-formation  is as follows:
\begin{itemize}
    \item The system starts given some initial block $B_0$ -- the Genesis block;
    \item Block producers generate the next block $B_i$  in the chain $(B_i)_{i\in[i-1]}$ through either functions $(x,p,q)$
    %, i.e., $B_i \leftarrow x((B_i)_{i\in[i-1]}, in_i)$, 
    or some other algorithms of their choice.
    %i.e., $B_i \leftarrow A((B_i)_{i\in[i-1]}, in_i)$
\end{itemize}

In order to determine whether a blockchain is valid or not, the protocol also specifies a validity function $y$. 
%A blockchain is considered valid if, for each block in the chain it holds that $V((B_i)_{i\in[k]} ) = 1$.
We assume $y$ always respects correctness, i.e, that valid blockchains extended by $x$, $p$, and $q$ remain valid.

\paragraph{Resource bounds.} Finally, for safety reasons, it is essential to enforce strict upper limits on resource consumption within the system. Exceeding these limits can lead to system failures, inefficiencies, increased latency, or even denial-of-service (DoS) scenarios. By maintaining these limits, we ensure that the system remains responsive and resistant to attacks or unintended excessive resource consumption. 

\begin{assumption}[Safety]\label{ass:safety}
	For each $j\in[m]$ and known $(G_j)_{j\in[m]}$, the maximum consumption of resource $j$  in a block 
	%of $\oned$ 
	is at most $G_j$.
	%must not exceed the predefined limits set for each resource.
\end{assumption}

We assume that $y$ ensures that valid blocks always satisfy the safety assumption.
%We can represent $M$ as a triple $(x, p, q)$, where $x$ is an allocation rule, $p$ is a payment rule, and $q$ is a burning rule.

%In addition, we denote as a candidate any transaction $j \in in_k$, whose bid is greater than the sum of the payment and the burning amount, i.e.,
%$$b_j > p_j((B_i)_{i \in [k-1]}, in_k) + q_j((B_i)_{i \in [k-1]}, in_k).$$
%The demand at each time $i$ is represented by the subset $S_i$ of candidate transactions in $in_i$. If all the transactions in $S_i$ can be added to the block $B_i$, then the network is uncongested; otherwise, the network is congested.

\subsection{The one-dimensional mechanism} \label{sec:1d}
We denote by $\oned := (x^1,p^1,q^1,y^1)$ the \emph{one-dimensional pricing mechanism}, similar to the one implemented in Ethereum's EIP-1559~\cite{buterinethereum}. In this mechanism, each block specifies a target consumption $T$ and a maximum consumption $G$.  The target represents the ideal consumption of \emph{gas}, an artificial resource constructed as a linear combination of physical resources.
The upper bound represents the maximum allowed gas consumption (c.f., safety assumption). Additionally, the target is set as half of the upper bound, i.e., $T  :=  G/2$.

Each block has a protocol-computed reserve price (per unit of gas) called the \emph{base fee}. Paying the base fee is a prerequisite for inclusion in a block. The base fee is a function of the preceding blocks only and does not depend on the transactions included in the current block. Specifically, we denote by $\alpha((B_i)_{i \in [k-1]})$, the base fee for the next block, which is determined by a given sequence of preceding blocks $(B_i)_{i \in [k-1]}$.
In particular, the specific adjustment rule - function $α$ - proposed in EIP-1559 computes the base fee $r_{cur}$ for the current block from the base fee $r_{pred}$ and the total gas consumption  $g_{pred}$ of the predecessor block using the following formula:

\begin{equation}\label{eq:eip_update_rule}
    r_{cur} := r_{pred} \cdot \left( 1 + \frac{1}{8} \cdot \frac{g_{pred} - T}{T} \right).
\end{equation}

Note, that the base fee varies according to the total consumption, i.e., when the total consumption exceeds the target, the base fee increases accordingly, and when the total consumption is less than the target, the base fee decreases accordingly.

\begin{definition}[Gas consumption of a transaction in $\oned$] \label{def:gas_consumption_1d}
The gas consumption $g_j$ of a transaction $tx_j$ in the $\oned$ mechanism is defined as a weighted sum of the resource consumption components. Let $(w_i)_{i \in [m]}\in \mathbb{R}^m_{>0}$ be  predefined  weight coefficients representing the different costs of each resource. The gas consumption $g_j$ is computed as:
$$g_j  :=  \sum_{i \in [m]} w_i \cdot c_i^j$$

\end{definition}

A direct consequence of the safety assumption for $M_1$ is the  following useful corollary.
\begin{corollary}\label{cor:safe}
	The total gas consumption must not exceed the upper bound of any resource, i.e., $G \leq w_i \cdot G_i,$ for any $i\in [m].$
\end{corollary}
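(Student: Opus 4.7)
The plan is to argue by contradiction, exhibiting an adversarial block that witnesses the claimed inequality. The $\oned$ mechanism only tracks the scalar gas quantity and, via the validity function $y^1$, admits any block whose total gas consumption is at most $G$. On the other hand, Assumption~\ref{ass:safety} must hold for every valid block and for every individual resource $i \in [m]$. So to derive the corollary, I would show that if $G$ were too large relative to some $w_i \cdot G_i$, one could cook up a block that is valid under $\oned$ yet violates the safety bound on resource $i$.

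Concretely, fix an arbitrary $i \in [m]$ and suppose for contradiction that $G > w_i \cdot G_i$. Consider a hypothetical block composed only of transactions whose consumption vectors have $c_k^j = 0$ for every $k \neq i$ (i.e.\ transactions that use resource $i$ exclusively). By Definition~\ref{def:gas_consumption_1d}, each such transaction has $g_j = w_i \cdot c_i^j$, so the total gas consumption of the block equals $w_i \cdot \sum_j c_i^j$. I would then choose the individual consumptions $c_i^j$ so that the total resource-$i$ consumption equals $G/w_i$; the resulting total gas equals exactly $G$, which is admissible under the gas bound of $\oned$ and hence passes $y^1$.

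Under our assumption $G > w_i \cdot G_i$, this block's consumption of resource $i$ is $G/w_i > G_i$, directly contradicting Assumption~\ref{ass:safety}. Since the choice of $i$ was arbitrary, this gives $G \leq w_i \cdot G_i$ for all $i \in [m]$, which is the statement of the corollary.

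There is essentially no difficulty here: the only subtle point is to make clear that the argument is about what values of $G$ are compatible with the safety assumption, rather than a property derived from some combinatorial structure. The adversarial block is allowed since transactions with arbitrary consumption vectors (in particular ones concentrated on a single resource) can in principle be submitted, and the $\oned$ validity check cannot distinguish them from any other transaction contributing the same amount of gas. This is exactly why the weights $(w_i)_{i \in [m]}$ and the scalar bound $G$ must be conservatively calibrated against the individual resource caps $(G_i)_{i \in [m]}$.
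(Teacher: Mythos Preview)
Your argument is correct and is exactly the natural justification: the paper itself does not spell out a proof but simply records the statement as ``a direct consequence of the safety assumption,'' and your contrapositive construction (a block concentrating all consumption on a single resource~$i$) is precisely how one unpacks that consequence. Nothing is missing.
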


%Bids in $M_1$ consist of two parts: the base fee and the tip. 
Besides the base fee, users can decide to pay a tip to the block producer 
for including their transaction. Formally, we define the \emph{tip} to be the difference of the bid and the base fee paid by a transaction, i.e.,  for transaction $tx_j$: $$tip_j := b_j - r\cdot g_j$$
The allocation rule $x^1$ of the $\oned$ mechanism is to include a subset of outstanding transactions that maximizes the sum of tips.

\begin{definition}[Allocation rule]\label{def:m1_allocation_rule}
For each history $(B_i)_{i \in [k-1]}$ and corresponding base fee $r  :=  \alpha((B_i)_{i \in [k-1]})$, 
 the $x^1$’s are assigned 0-1 values to maximize
%the (intended) allocation rule $x^1$ of the $\oned$ mechanism is to include a subset of outstanding transactions  that is,
$$\sum_{tx_{j} \in in_k}x^1_j((B_i)_{i \in [k-1]}, in_k) \cdot 
%(b_j-r \cdot g_j)
tip_j$$
subject to the gas constraint in Corollary~\ref{cor:safe}.
\end{definition}

In turn, the payment rule dictates that the tip must be transferred to the block producer, while the burning rule dictates that the base fee must be burned. 
%The specific function $\alpha$ proposed in EIP-1559 is the iteration of the base fee update rule in Equation~\ref{eq:eip_update_rule}, although these details will not be important for our analysis. 

\begin{definition}[Payment Rule] \label{def:m1_payment_rule}
For all $(B_i)_{i\in [k]}$ and $tx_j\in B_k$: $p^1_j((B_i)_{i\in [k]}) :=  tip_j$.
\end{definition}

\begin{definition}[Burning Rule]\label{def:m1_burning_rule}
For all $(B_i)_{i\in [k]}$ and $tx_j \in B_k$, and  $r  :=  α((B_i)_{i \in [k-1]})$: $q^1_j((B_i)_{i\in [k]})  :=  r \cdot g_j$.
\end{definition}

Finally, $y^1$ deems a blockchain $(B_i)_{i \in [k]}$ valid if every transaction in each block has paid its base fee and every block $B$ in the sequence satisfies the gas consumption constraint, i.e., that
\begin{equation}\label{eq:block_size_constraint}
	\sum_{{tx_j} \in B} g_j \leq G 
\end{equation}

\subsection{The multi-dimensional mechanism}

We denote by $\md := (x^m,p^m,q^m,y^m)$ the multi-dimensional pricing mechanism with $m$ dimensions~\footnote{The description of the mechanism is loosely based on public discussions in the Ethereum community~\cite{buterinmulti,EIP-4844}.}. Each block has a target $T_i$ and an upper bound $G_i$ for each resource; $G_i$ is  the upper bound stated in safety assumption. The target is set as half of the upper bound, i.e., $T_i  := G_i/2$. Each resource is priced independently of the others according to its demand. Therefore, each resource $i$ has a corresponding base fee $r_i$, for a total base fee of transaction $tx_j$ of $\sum_{i\in[m]} r_ic^j_i$ and corresponding tip $tip_j := b_j - \sum_{i\in[m]} r_ic^j_i$.

%Each block has protocol-computed reserve prices (per unit of each resource), called the base fees. 
As in the one-dimensional mechanism, paying the base fees is a prerequisite for inclusion in a block. The base fees are a function of the preceding blocks only and do not depend on the transactions included in the current block. For simplicity, we will assume that the individual resource base fees are determined in the same way as the gas base fee.
We denote the corresponding functions by $a_i$, with the price update equation for resource $i$ being:~\footnote{While in reality details of the price update rule may be different for different resources, for simplicity we here we assume that it is the same for all of them.}
\begin{equation}\label{eq:md_update_rule}
	r_{i,cur} := r_{i,pred} \cdot \left( 1 + \frac{1}{8} \cdot \frac{\sum_{tx_j \in B_{pred}} c_i^j - T_i}{T_i} \right)
\end{equation}

The (intended) allocation rule $x^m$ of the $\md$ mechanism is the same as that of $\oned$, i.e.,  include a subset of outstanding transactions that maximizes the sum of the tips

\begin{definition}[Allocation rule]\label{def:md_allocation_rule}
For each history $(B_i)_{i \in [k-1]}$ and corresponding base fees $r_i := \alpha_i((B_i)_{i \in [k-1]})$ for all $i \in [m]$, the $x^m$’s are assigned 0-1 values to maximize
$$\sum_{tx_j \in in_k}x^m_j((B_i)_{i \in [k-1]}, in_k) \cdot tip_j$$
subject to the block size constraints of safety assumption.
\end{definition}

The payment and burning rules are defined similarly to that of $\oned$.
\begin{definition}[Payment Rule] \label{def:md_payment_rule}
For all $(B_i)_{i\in [k]}$ and $tx_j\in B_k$: $p^m_j((B_i)_{i\in [k]}) := tip_j$
\end{definition}

\begin{definition}[Burning Rule]\label{def:md_burning_rule}
For all $(B_i)_{i\in [k]}$ and $tx_j \in B_k$, letting $r_i  :=  \alpha_i((B_i)_{i \in [k-1]})$ for all $i \in [m]$: $q^m_j((B_i)_{i\in [k]}) := \sum_{i \in [m]}r_i \cdot c_i^j$.
\end{definition}

The predicate $y^m$ deems a blockchain $(B_i)_{i \in [k]}$ as being valid if every transaction in each block has paid its base fee and every block $B$ in this sequence satisfies the constraints of safety assumption, i.e.,
\begin{equation}\label{eq:block_size_constraint_md}
	\sum_{tx_j \in B} c_i^j \leq G_i \quad \forall i \in [m]
\end{equation}

%We denote by $B_t^k$ the block at time $t$ generated by the $kd$ mechanism, where the superscript indicates the specific mechanism used to generate the block, and the subscript indicates the time period. 
%When we refer to \emph{valid transactions}, we mean those whose value exceeds the total base fee.

%\input{welfare}

\section{Welfare Comparison in the Stable State}\label{sec:welfare}
%Comparison of Welfare in 1d and md Pricing Mechanisms}

In this section, we compare the welfare of blocks generated by $\oned$ and $\md$, when prices are stable and the two mechanisms see the same set of transactions in the mempool. We show that the welfare generated by $\md$ is
always more than that by $\oned$.

We start with some preliminary definitions.
\emph{Welfare} is defined as the sum of the values of all transactions added to the block, i.e., $\sum_{tx_{j} \in B} v_j $. 
The price(s) of the two mechanisms are said to be \emph{stable}
if  they remain the same in the next block.
Users are \emph{truthful} if their
bid matches the value of the transaction.~\footnote{In the stable state, it has been shown in~\cite{roughgarden2024transaction} that bidding truthfully is a symmetric ex-post Nash Equilibrium for myopic users in $\oned$. The argument is that since this a reserved price mechanism, and assuming there is enough space in the block, the user stating its value truthfully is always optimal in the short-term. Note, that a similar argument can be made for the multidimensional mechanism, since it also uses reserved prices.}

The main idea behind our result, is to closely examine two sets: the set of 
transactions included only by $\md$ and the set of transactions included only by $\oned$. For the first set, it is straightforward to lower bound welfare by the sum of each resource consumed times each price; due to stability we know \emph{exactly} how much of the resources was consumed.
On the other hand, for the transactions on the second set, we know that they are not included by $\md$, and thus their value (and consequently the welfare they generate)  must be less than the value generated in $\md$ by the consumption of the corresponding amount of resources. Otherwise, they would have been included by $\md$. Combining this observation with the fact
that due to the safety assumption the  consumption in $\md$ is larger than in $\oned$ allows us to prove our result.

%We are now ready to prove our main result. 

\begin{theorem}\label{thm:welfare}
	The welfare generated by a block produced by $\md$ is larger than that by $\oned$, given that: 
	(i) prices in both mechanisms are stable, 
	(ii) block producers see the same set of transactions in their mempools,
	(iii) users truthfully report their values, and
	%the bidding value of its transaction matches its value, and
	(iv) safety assumption holds. 
	%and (iv) that $w^{l,k}_i$ equal to zero implies that  $w^{l,k+1}_i$ is also zero, for any $l\in[k], i\in[m]$.
\end{theorem}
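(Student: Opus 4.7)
The plan is to partition the pending transactions into three sets: $A$, the transactions included only by $\md$; $B$, those included only by $\oned$; and $C$, those included by both. Since $C$ contributes identically to both welfares, the difference reduces to $W^m - W^1 = \sum_{tx_j \in A} v_j - \sum_{tx_j \in B} v_j$, and it suffices to lower bound the first term and upper bound the second.

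For the lower bound, I would note that truthfulness ($b_j = v_j$) combined with the fact that paying the base fee is a prerequisite for inclusion in $\md$ gives $v_j \geq \sum_{i \in [m]} r_i c_i^j$ for every $tx_j \in A$. Summing over $A$ and invoking stability of $\md$, which through equation~(\ref{eq:md_update_rule}) forces $\sum_{tx_j \in A \cup C} c_i^j = T_i$ for every resource $i$, yields
\begin{equation*}
	\sum_{tx_j \in A} v_j \;\geq\; \sum_{i \in [m]} r_i \Bigl(T_i - \sum_{tx_j \in C} c_i^j\Bigr).
\end{equation*}

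For the upper bound the key step is to argue that the set $B$ could be appended to $\md$'s block without violating any safety constraint. Stability of $\oned$ forces $\sum_{tx_j \in B \cup C} g_j = T = G/2$, and Corollary~\ref{cor:safe} provides $G \leq w_i G_i$. Since $g_j \geq w_i c_i^j$, a per-dimension argument bounds $\sum_{tx_j \in B \cup C} c_i^j \leq G/(2 w_i) \leq T_i$, which combined with the fact that $\md$'s block already fills only $T_i = G_i/2$ of each capacity shows that $A \cup B \cup C$ still respects every $G_i$. Optimality of $\md$'s allocation (Definition~\ref{def:md_allocation_rule}) then forces $\sum_{tx_j \in B}\bigl(v_j - \sum_i r_i c_i^j\bigr) \leq 0$, for otherwise $\md$ would have enlarged its block by $B$. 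Substituting both bounds collapses the welfare difference to $\sum_i r_i\bigl(T_i - \sum_{tx_j \in B \cup C} c_i^j\bigr)$, which is non-negative by the same per-dimension inequality, closing the argument.

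The main obstacle is the fitting step, because it is the unique place where all four hypotheses interact: stability of $\oned$ controls the total weighted consumption of $B \cup C$, Corollary~\ref{cor:safe} converts that one-dimensional cap into a per-resource cap, and stability of $\md$ provides exactly matching slack in each dimension. A subtle point worth flagging is that the optimality comparison has to be made for $B$ as a whole rather than transaction by transaction; an individual $tx_j \in B$ might or might not be swappable against elements of $A$, but the fact that \emph{the entire} $B$ fits in $\md$'s remaining capacity is what licenses the single aggregate comparison of tips that drives the bound.
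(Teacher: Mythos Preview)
Your decomposition into $A$, $B$, $C$ and the two key bounds match the paper's proof exactly: the lower bound on $\sum_{tx_j\in A} v_j$ via truthfulness and $\md$-stability, and the per-resource inequality $\sum_{tx_j \in B \cup C} c_i^j \le T_i$ via $\oned$-stability combined with Corollary~\ref{cor:safe}. The one place you diverge is in how the upper bound on $\sum_{tx_j\in B} v_j$ is closed. The paper reasons \emph{per transaction}: each $tx_j \in B$ must have $v_j < \sum_i r_i c_i^j$, for otherwise (there being room at stability) it would already appear in $B^m$. You instead invoke optimality of the allocation rule \emph{in aggregate}, using the fact that the whole of $B$ fits on top of $B^m$ within the $G_i$ caps to conclude $\sum_{tx_j \in B} tip_j \le 0$. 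Both routes are valid and yield the same inequality; your formulation is arguably tidier since it sidesteps the implicit single-transaction ``there is room'' step. One small gap to flag: as written you obtain only $W^m \ge W^1$, whereas the statement asserts strict inequality. The paper secures strictness by arguing that, under the safety bound and with $m \ge 2$, the block $B^1$ cannot hit the target $T_i$ simultaneously for every resource, so $B^m \setminus S$ is nonempty and contributes strictly positive welfare; you would need to add this observation.
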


%\begin{theorem}
%If Assumption~\ref holds and at some time $t$: (i) prices are stable and (ii) block producers see the same mempools in $M_1$ and $M_m$, then 
%$M_m$ achieves higher welfare than $M_1$, i.e., $W_t^1 < W_t^m$.
%\end{theorem}

\begin{proof}
For the sake of contradiction assume that $B^1$, the block generated by $\oned$, has higher welfare than $B^m$, the block generated by $\md$.
%i.e., $W_t^1 \geq W_t^m$.
First, note that price stability implies that consumption is equal to the target, i.e., in $\oned$
$$\sum_{tx_j \in B} \sum_{i \in [m]} w_i c^j_i = T$$
and in $\md$, for all $i \in  [m]$ it holds that:
$$\sum_{tx_j \in B} \sum_{i \in [m]} c^j_i = T_i.$$

Now, let $S = B^1 \cap B^m$ be the transactions included in both systems. Define the residual targets after excluding $S$:
$$T' = T - \sum_{tx_j \in S}\sum_{i \in [m]}w_i c_i^j, \quad T_i' = T_i - \sum_{tx_j \in S}c_i^j \text{ for all }i \in [m]$$

Note that the set $S$ may coincide with $B^1$, but it must be a proper subset of $B^m$. This is because in the $\oned$, the total gas consumption is exactly $T$, which by safety assumption is less than or equal to each individual term $w_i T_i$, i.e., $\sum_{i \in [m]} w_i \sum_{tx_j \in B^1}c_i^j \leq w_i T_i$ for all $i \in [m]$. This implies that in $\oned$, for at most one resource $k$ the total consumption can be equal to $T_k$. For the sake of contradiction, assume that this condition is met for two resources, say $k$ and one more.
% Let's assume that this is the case.  
Then we have:
$$T = \sum_{tx_j \in B^1}\sum_{i \in [m]}w_ic_i^j = w_kT_k + \sum_{tx_j \in B^1} \sum_{i \in [m]: i \neq k}w_ic_i^j$$
But since $T$ is less than or equal to $w_kT_k$, it must be that: 
$$\sum_{tx_j \in B^1} \sum_{i \in [m]: i \neq k}w_ic_i^j = 0$$
which is a contradiction to our initial claim.
In contrast, in  $\md$, due to stability all the total resource consumption is going to be higher, and  thus  $B^m$ must include additional transactions beyond those in $S$. 
 
Next, we show a lower bound on the welfare generated by the transactions in $B^m\setminus S$.
If a transaction is only contained in $\md$, i.e., it belongs to the set $B^m \setminus S$, due to truthfulness it must be the case that its value is at least equal to the base fee, i.e., $v_j \geq \sum_{i \in [m]}r_i c_i^j$. Otherwise,  it  would not have been selected by the mechanism. Note, that all transactions that can pay the base fees are included by $\md$, as the system is assumed to be in stability, and thus there is available space to include all transactions paying the required fee.
%far from congestion where the mechanism needs to select a subset of valid transactions.
By summing over all transactions in $B^m \setminus S$, we obtain the desired lower bound on welfare: 
\begin{equation}\label{eq:wel2}
\sum_{tx_j \in B^m \setminus S} v_j \geq \sum_{tx_j \in B^m \setminus S} \sum_{i \in [m]}r_i c_i^j = \sum_{i \in [m]}r_i \sum_{tx_j \in B^m \setminus S} c_i^j = \sum_{i \in [m]} r_i T_i'
\end{equation}
where the first equality follows by switching the order of summation and the second one  from the inequality below:
$$\sum_{tx_j \in B^m \setminus S} c_i^j = \sum_{tx_j \in B^m} c_i^j - \sum_{tx_j \in S} c_i^j = T_i - \sum_{tx_j \in S} c_i^j = T_i'$$

Next, we show a similar upper bound on the welfare generated by transactions in $B^1 \setminus S$.
%We start by showing that the per-resource consumption in $B_1\setminus S$ is less than the related per-resource consumption of transactions in $B_m\setminus S$.
By safety assumption, we have that $T$ is less than or equal to $w_i T_i$, for all $i \in [m]$. Given the targets $T$, $T'$, for all $i \in [m]$ we can get that:
%$$T' = T - \sum_{tx_j \in S}\sum_{i \in [m]}w_i c_i^j \leq w_i T_i - \sum_{tx_j \in S}\sum_{i \in [m]}w_i c_i^j, \quad \text{for all } i \in [m]$$
\begin{equation}\label{eq:T'_bound}
T' \leq w_i T_i - \sum_{tx_j \in S}\sum_{i \in [m]}w_i c_i^j\leq w_i T_i - \sum_{tx_j \in S} w_i c_i^j = w_i (T_i - \sum_{tx_j \in S} c_i^j) = w_i T_i' 
\end{equation}
where the first inequality follows from the definition of $T'$ and safety assumption. The second inequality holds because the sum over all resources is higher than or equal to the contribution of any single resource. In the first equality, the weight $w_i$ is factored out of the summation, and in the second equality, we apply the definition of $T_i'$.

We know that $T'$ represents the total weighted resource consumption of transactions in $B^1 \setminus S$. For each resource $i$, 
$T'$ can be lower bounded by:
%the total weighted consumption is:
\begin{align*}
	T' &= T - \sum_{tx_j \in S}\sum_{i \in [m]}w_i c_i^j = \sum_{tx_j \in B^1}\sum_{i \in [m]}w_i c_i^j - \sum_{tx_j \in S}\sum_{i \in [m]}w_i c_i^j 
	\\
	&= 
	\sum_{tx_j \in B^1 \setminus S} \sum_{i \in [m]} w_i c_i^j \geq \sum_{tx_j \in B^1 \setminus S} w_i c_i^j
\end{align*}

Combining this with Inequality~\ref{eq:T'_bound}, and dividing by $w_i$,  we get that the total consumption of each resource from the transactions that are included only in the $\oned$ is less than or equal to the total consumption of the respective resource in the $\md$:
\begin{equation}\label{eq:bound_1d}
\sum_{tx_j \in B^1 \setminus S} c_i^j \leq T_i', \quad \text{for all } i \in [m]
\end{equation}

%We complete the argument by showing that the welfare generated per-resource unit  consumed in $B_1\setminus S$, is necessary less than that of $B_m \setminus S$.
If a transaction is only contained in $\oned$, i.e., it belongs to the set $B^1 \setminus S$, due to truthfulness it is implied that it has a value that is less than what it would have to pay to be included in $\md$, i.e., $v_j < \sum_{i \in [m]} r_i c_i^j$. Otherwise, it must have been included in $\md$, which leads to a contradiction.
Summing up we get the following:
\begin{equation}\label{eq:wel1}
\sum_{tx_j \in B^1 \setminus S} v_j < \sum_{tx_j \in B^1 \setminus S} \sum_{i \in [m]} r_i c_i^j = \sum_{i \in [m]} r_i \sum_{tx_j \in B^1 \setminus S} c_i^j \leq 
\sum_{i \in [m]} r_i T_i'  
\end{equation}
where the equality follows by switching the order of summation and the last inequality follows from inequality \ref{eq:bound_1d}.

Combining the Inequalities ~\ref{eq:wel2} and~\ref{eq:wel1}, it follows that $\sum_{tx_j \in B^1 \setminus S} v_j$ is less than $\sum_{tx_j \in B^m \setminus S} v_j$.
Adding the common transactions in $S$ to both sides, we get that
$\sum_{tx_j \in B^1 } v_j$ is less than $\sum_{tx_j \in B^m } v_j$, 
contradicting our initial hypothesis about the welfare generated in $B^1$ and $B^m$.
\qed
\end{proof}

A related question to the one we answered above, is whether the welfare gap can be upper bounded.
A first answer is \emph{no}, since there may exist transactions of arbitrarily high value that if included in $\md$ lead to stability, while if included in $\oned$ make the system unstable. To see this, 
take a transaction that consumes an amount of resources that is equal to the
resource targets $T_i$ for all $i\in[m]$. Even if this transaction is admissible by $\oned$, 
it will not lead to a stable price as its gas consumption can be shown to exceed $T$. 

In reality, the answer is slightly more complicated, as in many deployed systems~\footnote{E.g.,~\cite{EIP-7825}.} there is an upper bound on the amount of resources a transaction may consume that is a fraction of the $T_i$'s.

\medskip

We have shown that $\md$ outperforms $\oned$ in terms of welfare generation in the stable state. Next, we focus on comparing the two mechanisms in different transient states.

\section{Stabilization time}\label{sec:stability}

Next, we examine the behavior of the two mechanisms at first transient-state of interest: the system being in the process of getting \emph{stable}.
We focus in the time to stability after a demand shock, and show that as long as the individual price stabilization time distributions have a non-zero tail, $\oned$  outperforms $\md$.

We start by describing the experiment we are going to use to access the stabilization time of the two  mechanisms. The experiment concerns a blockchain protocol where blocks are produced using one of the two mechanisms of interest, either $\oned$ or $\md$. Initially, mempools are assumed to be filled according to some distribution ${\cal F}$ and the system to have run long enough for prices to be stable. Then, at some time $t$, an unexpected event occurs, causing the system to destabilize. This is captured by assuming that the mempools after $t$ are filled according to a different distribution ${\cal F}'$. Given this setup, we define the \emph{stabilization time} as the time required for the price(s) to be stable again.

\medskip

Next, we introduce some notation for our analysis. By random variable $Z$ we denote the time required for the price to be stable again when our experiment is run with $\oned$, while by $Z^m$ in the case of $\md$. 
Further, let random variable $Z_i$ denote the time it takes for the price of resource $i$ to be stable, for $i\in [m]$, in the multidimensional case. The quantity we will focus on analyzing is the expectation ratio between $Z$ and $Z^m$, i.e., 
$$\text{Ratio}_m := \mathbb{E}[ Z^m ] / \mathbb{E}[Z] $$
$\text{Ratio}_m$ increasing rapidly with $m$, suggests that the mechanism becomes impractical for multi-resource systems due to excessive stabilization time. Conversely, if the ratio is negligible, then the mechanism remains viable even when we increase the number of priced resources.

\medskip 

We are going to make two simplifying assumptions in our analysis.
Our first assumption, has to do with the total consumption distribution of each resource in $\md$. Namely, it says that it only depends on the price of  this resource.

\begin{assumption}\label{ass:D_i}
For the mempool distribution ${\cal F}'$, it holds that for each resource $i \in [m]$, the total consumption of this resource by valid transactions~\footnote{By valid transactions we mean the transactions  in the mempool whose value exceeds the base fee and thus are eligible for inclusion.} in the mempool, denoted by $D_i$, depends only on the related base fee $r_i$.
\end{assumption}

The motivation behind this simplifying assumption is that the price of the resource is
expected to be the main motivating factor for increasing or decreasing its consumption, e.g., $r_i$ going up would most likely lead some users to decreasing its consumption. 

Our second simplification is to assume that the distributions of stabilization times are more or less the \emph{same} for 
all prices (both in $\oned$ and $\md$). The motivation is that prices are adjusted multiplicatively  (Equation~\ref{eq:eip_update_rule}, Section~\ref{sec:1d}),  suggesting that the relevant update procedure approaches the stable price at a similarly fast rate, that is independent of the individual resource considered.
Similarity of distributions is formalized by upper-bounding the closeness of the statistical distance of the relevant distributions by some parameter $\delta$ (which is expected to be small).

\begin{assumption}\label{ass:stabilization}
There exists some $\delta \in [0,1)$, such that for all $i\in[m]$ it holds that $\Delta(Z,Z_i) \leq \delta$.
\end{assumption}

Given these two assumptions, we  can show that the expectation ratio  positively depends on: (i) the size of the  tail~\footnote{If the size of the tail is $0$, i.e., all probability mass is on $\mathbb{E}[Z]$, then it easy to see that $\text{Ratio}_m$ is close to $1$, as all prices stabilize at about the same time.} of $Z$, reflected by parameters $c,p$ in the theorem, and  (ii) the number of dimensions $m$; an error term related to $\delta$ also appears in the ratio.
The intuition is that as the number of dimensions grow, the probability that one of the stabilization times ($Z_i$) falls into the tail of the distribution increases.
We are now ready to formally prove our theorem.

\begin{theorem}[Expectation Ratio] \label{thm:ratio_convergence}
	Given Assumptions \ref{ass:D_i} and \ref{ass:stabilization}, for any constant $c>0$,	it holds that:
	$$\text{Ratio}_m 
	\geq (1 + c) \left( 1 - (1 - p - \delta)^{m-1} \right )  \frac{ p}{p + \delta} - O(\delta/p)$$
	where $p := \mathbb{P}[Z \geq (1+c) \mathbb{E}[Z]]$.
\end{theorem}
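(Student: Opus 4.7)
My plan is to reduce $\mathbb{E}[Z^m]$ to the expectation of a maximum of independent near-copies of $Z$, and then to lower bound that maximum via a tail estimate at the level $T := (1+c)\mathbb{E}[Z]$. Two structural observations drive the argument. First, the multi-dimensional mechanism is stable exactly when every per-resource base fee has stabilized, so $Z^m = \max_{i \in [m]} Z_i$. Second, Assumption~\ref{ass:D_i} makes the demand $D_i$ depend only on $r_i$, and the price-update rule (Equation~\ref{eq:md_update_rule}) evolves $r_i$ from past consumption of resource $i$ alone; hence $Z_1, \ldots, Z_m$ are mutually independent.

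The statistical-distance bound in Assumption~\ref{ass:stabilization} applied to the event $[T,\infty)$ gives $\mathbb{P}[Z_i \ge T] \ge p - \delta$ for each $i$. Isolating one coordinate $Z_1$ and splitting on whether the remaining maximum has already crossed $T$, I would write
\begin{equation*}
\mathbb{E}[Z^m] \;\ge\; \mathbb{E}[Z_1] \cdot \mathbb{P}\!\left[\max_{i \ge 2} Z_i < T\right] \;+\; T \cdot \mathbb{P}\!\left[\max_{i \ge 2} Z_i \ge T\right],
\end{equation*}
which is valid because $Z^m \ge Z_1$ always, $Z^m \ge T$ on the second event, and $Z_1$ is independent of $\{Z_i\}_{i \ge 2}$. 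Independence over the $m-1$ remaining coordinates gives $\mathbb{P}[\max_{i \ge 2} Z_i \ge T] \ge 1 - (1 - (p-\delta))^{m-1}$, which is the source of the $(m-1)$-exponent in the theorem. Dividing by $\mathbb{E}[Z]$, the second term already contributes the leading factor $(1+c)\bigl(1 - (1-p-\delta)^{m-1}\bigr)$ of the advertised bound.

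What remains is to convert the residual $\delta$-slack (carried by $\mathbb{E}[Z_1]/\mathbb{E}[Z]$ and by the gap between $p$ and $p-\delta$) into the prefactor $p/(p+\delta)$ and the additive $O(\delta/p)$ error. I would do this by splitting $\mathbb{E}[Z]$ at the threshold $T$ and bounding the contribution of each side against the corresponding contribution for $Z_1$ via Assumption~\ref{ass:stabilization}, using $p$ in the denominator to normalize. The main obstacle I anticipate is precisely this step: a bare statistical-distance bound does not control expectations of unbounded variables, so the $O(\delta/p)$ error cannot follow from a generic coupling. One must exploit that the mass above $T$ equals $p$ to charge each $\delta$-sized mismatch between the distributions of $Z$ and $Z_i$ to at most a $\delta/p$ multiplicative loss, which is exactly what the form $p/(p+\delta)$ together with the additive $O(\delta/p)$ error encodes.
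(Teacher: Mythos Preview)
Your structural reductions match the paper: $Z^m=\max_{i\in[m]}Z_i$, mutual independence of the $Z_i$ via Assumption~\ref{ass:D_i} together with the per-resource update rule, and the tail estimate $\mathbb{P}[Z_i\ge T]\ge p-\delta$ from Assumption~\ref{ass:stabilization}. The decomposition, however, is different. The paper does not isolate one coordinate and bound the tail of the remaining maximum in a single shot. Instead it conditions on $\{Z_m\ge k\}$ versus $\{Z_m<k\}$ (with $k=(1+c)\mathbb{E}[Z]$), uses $\mathbb{E}[Z^m\mid Z_m<k]\ge \mathbb{E}[Z^{m-1}]$ by independence, and applies the two statistical-distance bounds $\mathbb{P}[Z_m\ge k]\ge p-\delta$ and $\mathbb{P}[Z_m<k]\ge 1-p-\delta$ simultaneously to obtain the recursion
\[
\text{Ratio}_m \;\ge\; (1+c)(p-\delta)\;+\;(1-p-\delta)\,\text{Ratio}_{m-1}.
\]
Unrolling and discarding only $\text{Ratio}_1\ge 0$ yields $(1+c)(p-\delta)\,\dfrac{1-(1-p-\delta)^{m-1}}{p+\delta}$. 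The denominator $p+\delta=1-(1-p-\delta)$ of the geometric sum is the sole origin of the factor $\tfrac{p}{p+\delta}$ in the statement, and the additive $O(\delta/p)$ is nothing more than the rewriting $(p-\delta)/(p+\delta)=p/(p+\delta)-\delta/(p+\delta)$ multiplied by the bounded prefactor $(1+c)\bigl(1-(1-p-\delta)^{m-1}\bigr)$.

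Your final paragraph is therefore aiming at the wrong target, and the obstacle you flag is real for your route but absent from the paper's. No comparison of $\mathbb{E}[Z_1]$ with $\mathbb{E}[Z]$ is ever needed or attempted: the recursion uses only tail probabilities at each step and bottoms out at the trivial $\text{Ratio}_1\ge 0$, so the unbounded-expectation issue never arises. Note also a sign discrepancy your one-shot product forces: bounding $\prod_{i\ge 2}\mathbb{P}[Z_i<T]$ from above gives the base $1-p+\delta$, whereas the paper's per-step lower bound $\mathbb{P}[Z_m<k]\ge 1-p-\delta$ (used as the recursion coefficient) produces the base $1-p-\delta$ that actually appears in the theorem.
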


\begin{proof}
	Our first observation is that  
	we can lower-bound $Z^m$ by $\max_{i \in [m]} Z_i$, as individual price needs to stabilize before all of them are stable; for the remaining of this proof assume that $Z^m:=\max_{i \in [m]} Z_i$. Similarly, we define $Z^j := \max_{i \in [j]} Z_i$, for all $j\in[m]$.  
	Further, we note that due to Assumption~\ref{ass:D_i}, the $Z_i$'s form a set of mutually independent random variables. To see this, observe that 
	the value of $Z_i$ only depends on the consumption of resource $i$, which in turn only depends on value $r_i$, which is independent of any other value, and thus any other $Z_j$.

	Let now $k:=  (1+c)\mathbb{E}[Z]$.
	We apply the law of total expectation by partitioning the sample space based on whether $Z_m$ is greater than or equal to $k$ or $Z_m$ is less than $k$:
	$$\mathbb{E}[Z^m] = \mathbb{E}[Z^m | Z_m \geq k] \mathbb{P}[Z_m \geq k] + \mathbb{E}[Z^m | Z_m < k] \mathbb{P}[Z_m < k]$$
	
	We then establish bounds for each part of the expectation. Consider the case when $Z_m$ is greater than or equal to $k$. Since the maximum of $Z^m$ must be at least $k$ in this scenario, it follows that $\mathbb{E}[Z^m|Z_m \geq k]$ is greater than or equal to $k$. Furthermore,
	by Assumption~\ref{ass:stabilization} and the properties of statistical distance,
	we get that 
	$$|\mathbb{P}[Z \geq k] - \mathbb{P}[Z_i \geq k]| \leq \delta, \text{ for all } k \in \mathbb{N}$$ 
	which implies that:
	$$\mathbb{P}[Z_m \geq k] \geq \mathbb{P}[Z \geq k] - \delta = p - \delta$$
	
	Now consider the complementary case when $Z_m$ is less than $k$. Adding this ``bad'' sample to the collection of the other $m - 1$ samples cannot reduce the maximum value. Therefore, due to the independence assumption, $\mathbb{E}[Z^m|Z_m < k]$ is greater than or equal to $\mathbb{E}[Z^{m-1}]$. The associated probability is also bounded from below using Assumption~\ref{ass:stabilization}
	as before: 
	$$\mathbb{P}[Z_m < k] \geq \mathbb{P}[Z < k] - \delta = 1 - p - \delta$$
	
	Substituting these bounds into our equation:
	$$\mathbb{E}[Z^m]\geq k(p - \delta) + \mathbb{E}[Z^{m-1}] (1-p - \delta)$$
	
	Dividing both sides by $\mathbb{E}[Z]$ to obtain the ratio:
	$$\text{Ratio}_m = \mathbb{E}[Z^m] / \mathbb{E}[Z] \geq \frac{k(p - \delta) + \mathbb{E}[Z^{m-1}] (1-p - \delta)}{\mathbb{E}[Z]}$$
	
	Since $k = (1+c)\mathbb{E}[Z]$, we have $\frac{k}{\mathbb{E}[Z]} = 1+c$. Therefore:
	$$\text{Ratio}_m \geq (1+c)(p - \delta) + \text{Ratio}_{m-1} (1 - p - \delta)$$
	
	Setting $a = (1+c)(p - \delta)$ and $b = (1 - p - \delta)$, we obtain:
	$$\text{Ratio}_m \geq a + b \cdot \text{Ratio}_{m-1}$$
	
	Further, note that this inequality can be derived for $\text{Ratio}_{j}$ for any $j \in [m]$, by making the same argument starting from the respective $Z^j$. By repeatedly applying this inequality starting from $\text{Ratio}_m$, we get:
	$$\text{Ratio}_m \geq a + b \cdot \text{Ratio}_{m-1} = a(1 + b) + b^2 \text{Ratio}_{m-2} \geq \ldots\geq a\sum_{i=0}^{m-2} b^i + b^{m-1}\text{Ratio}_1$$
	
	Using the formula for the sum of a geometric series: $\sum_{i=0}^{n} r^i = \frac{1 - r^{n+1}}{1 - r}$ for $r \neq 1$ with $r = b$ and $n = m-2$, and applying the lower bound that $\text{Ratio}_1 \geq 0$, it follows that:
	$$\text{Ratio}_m \geq a \frac{1 - b^{m-1}}{1-b}$$
	Restoring the original expressions:
	$$\text{Ratio}_m \geq (1+c)(p - \delta) \frac{1 - (1 - p - \delta)^{m-1}}{p + \delta}$$
	
	\ignore{
	Restoring the original expressions:
	$$\text{Ratio}_m \geq (1+c)(p - \delta)\sum_{i=0}^{m-2} (1 - p - \delta)^i + (1-p-\delta)^{m-1}\text{Ratio}_1$$
	
	Using the formula for the sum of a geometric series: $\sum_{i=0}^{n} r^i = \frac{1 - r^{n+1}}{1 - r}$ for $r \neq 1$ with $r = (1 - p - \delta)$ and $n = m-2$, and applying the lower bound that $\text{Ratio}_1 \geq 0$, it follows that:
	$$\text{Ratio}_m \geq (1+c)(p - \delta) \frac{1 - (1 - p - \delta)^{m-1}}{p + \delta}$$
}	
	The theorem follows.
\end{proof}

As an example, taking $c$ equal to $2$, and assuming the tail of $Z$ goes beyond $3\mathbb{E}[Z]$, $\lim\limits_{m\rightarrow \infty} \text{Ratio}_m$ tends to $(1 + 2) \cdot \frac{p}{p + \delta} \approx 3$ for small $\delta$. This means that the expected stabilization time of $\md$ is going to be approximately three times that of $\oned$. The longer and heavier the tail, the better our bound gets.

Finally, in Figure~\ref{fig:all_distributions}, we show how the expectation ratio  grows with respect to $m$ when $Z$ and $(Z_i)_{i\in [m]}$ follow four well-known distributions: the Geometric, Poisson, Negative Binomial and Logarithmic; again we assume that we are in the best-case for $\md$, where the price of resource $i$ remains stable after it is stabilized once.

%To get a better sense of how the expectation gap grows with the number of di
 %As expected, the heavier the tail of the respective distributions is, the larger the gap is. \giorgos{Kyriaki: check this.}

%\url{https://colab.research.google.com/drive/1uJXg0DnVPnJ0oMYzDtyEAIEUTsKtqpuu}

\begin{figure}[h!]
	\centering
	
	%\begin{subfigure}[b]{0.45\textwidth}
	%	\centering
	%	\includegraphics[width=\textwidth]{diagrams/uniform_distribution.png}
	%	\caption{Uniform Distribution}
	%	\label{fig:uniform}
	%\end{subfigure}
	%\hfill
	\begin{subfigure}[b]{0.45\textwidth}
		\centering
		\includegraphics[width=\textwidth]{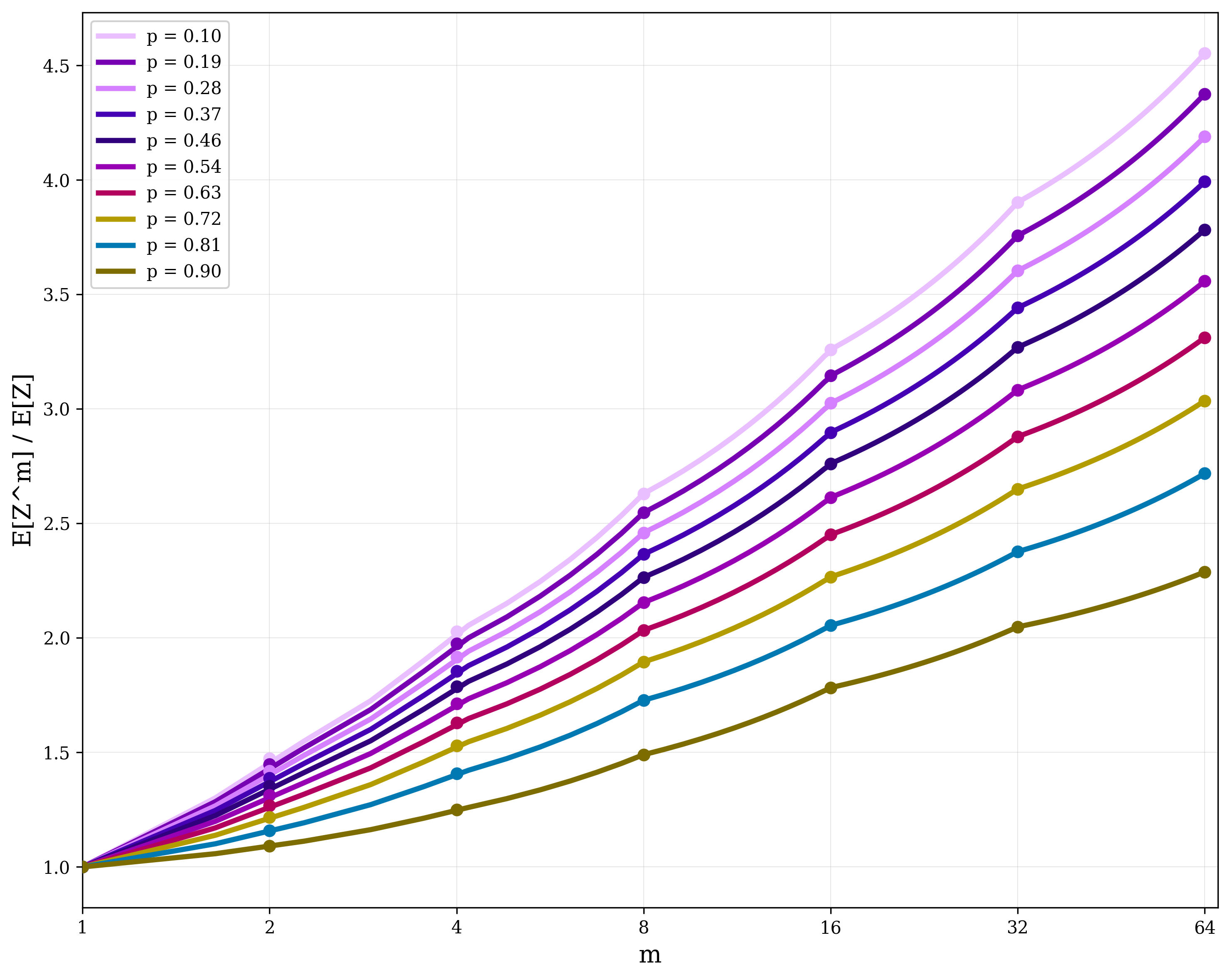}
		\caption{Geometric Distribution}
		\label{fig:geometric}
	\end{subfigure}
	\begin{subfigure}[b]{0.45\textwidth}
		\centering
		\includegraphics[width=\textwidth]{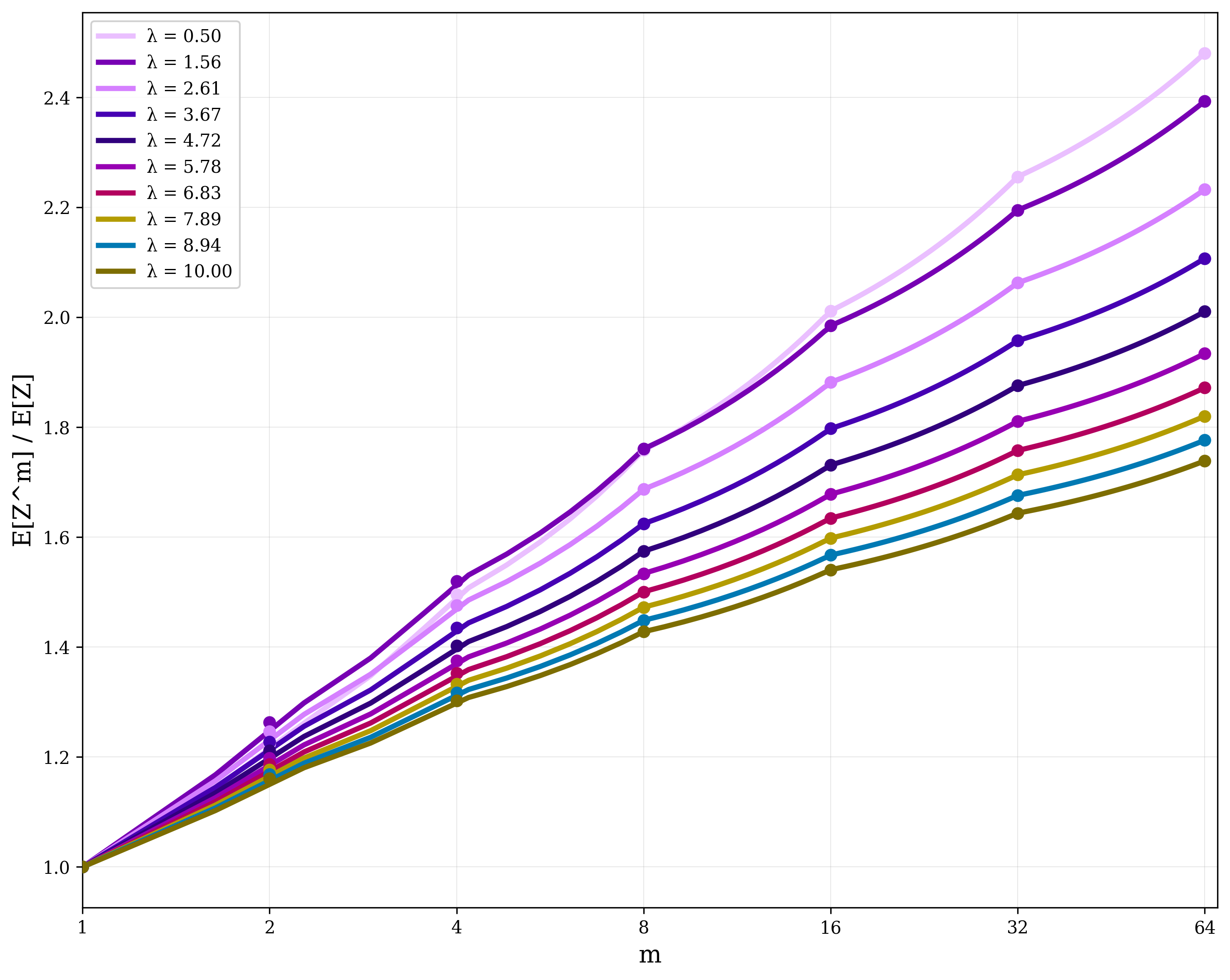}
		\caption{Poisson Distribution}
		\label{fig:poisson}
	\end{subfigure}

	%\vspace{0.5cm} 
	%\begin{subfigure}[b]{0.45\textwidth}
	%	\centering
	%	\includegraphics[width=\textwidth]{diagrams/binomial_distribution.png}
	%	\caption{Binomial Distribution}
	%	\label{fig:binomial}
	%\end{subfigure}
	%\hfill

	\begin{subfigure}[b]{0.45\textwidth}
		\centering
		\includegraphics[width=\textwidth]{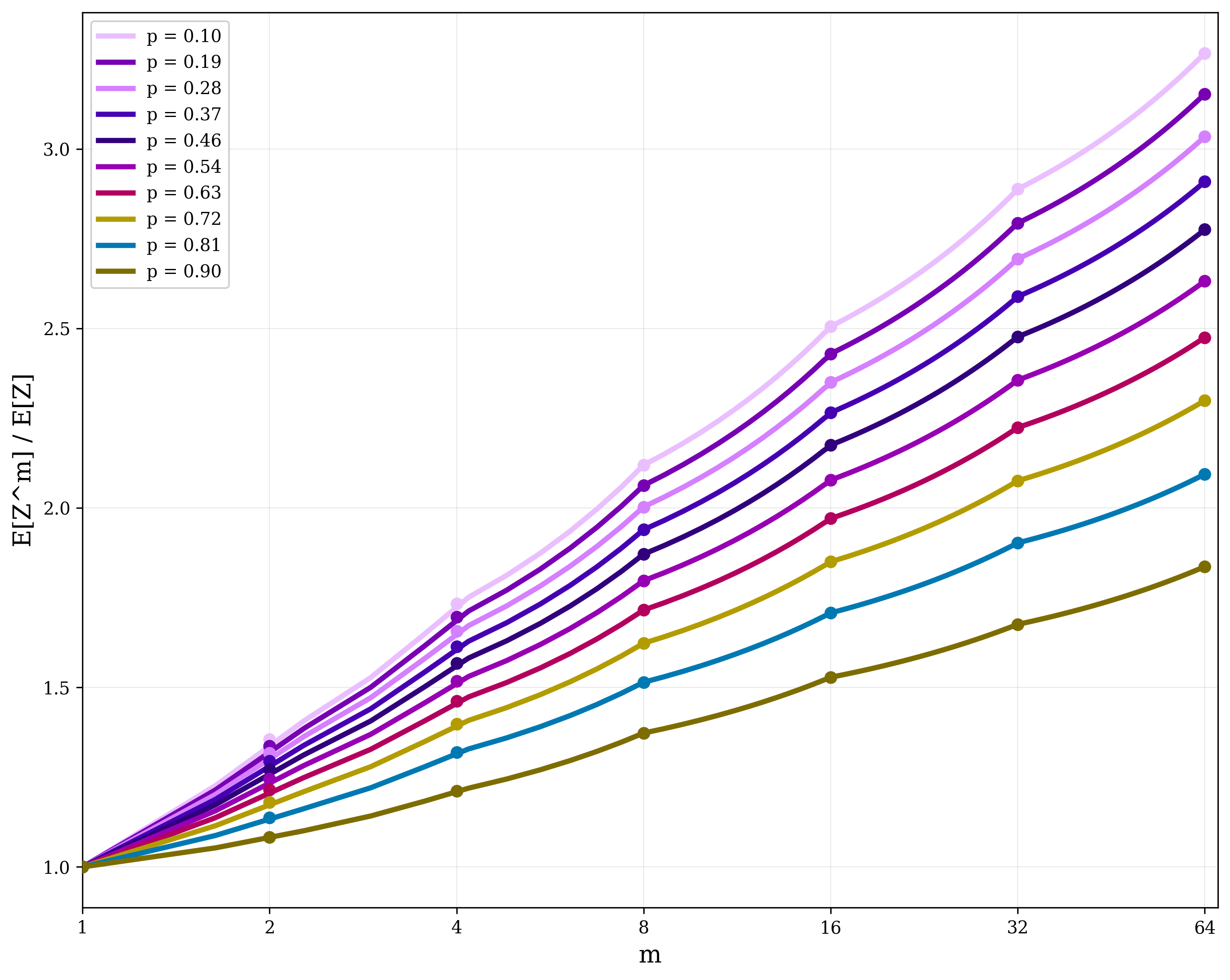}
		\caption{Negative Binomial Distribution}
		\label{fig:negbinom}
	\end{subfigure}
	\begin{subfigure}[b]{0.45\textwidth}
		\centering
		\includegraphics[width=\textwidth]{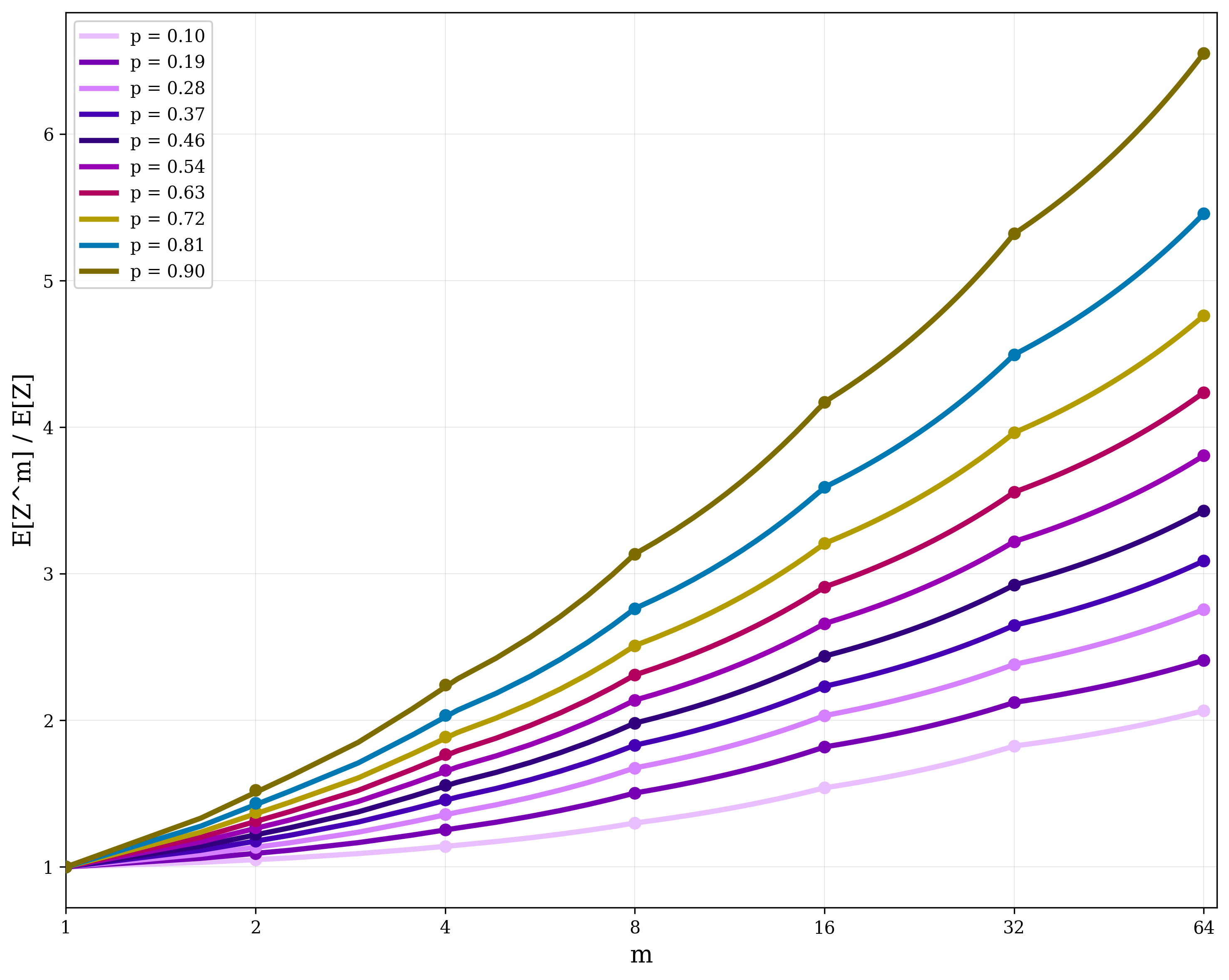}
		\caption{Logarithmic Distribution}
		\label{fig:logarithmic}
	\end{subfigure}
	
	\caption{The expectation ratio vs. the number of dimensions $m$.}
	\label{fig:all_distributions}
\end{figure}

%\FloatBarrier

\section{Revenue maximization during congestion}\label{sec:revenue}

In this section, we examine the behavior of the two mechanisms at our second transient-state of interest: the system being \emph{under congestion}.
%and (ii) when  the system tries to stabilize.
In such a state, it is natural for \emph{rational} block producers to try to maximize the revenue they are going to obtain, as the number of  transactions eligible for inclusion in the block exceeds the available space. We show that the relevant revenue maximization problem  becomes harder and harder as the number of dimensions grow, while still being relatively easy in one-dimensional setting.

Our main technical idea is reducing the hardness of the revenue maximization problem to that of the multidimensional knapsack problem. 
Some technical complexity arises due to the fact that revenue comes from tips, and thus our reduction must be w.r.t. the tipping strategies adopted by users.
% As we show, the result holds for a large class of tipping function.

\subsection{Preliminaries}
We start with some preliminary definitions and results.

%In fact, the revenue maximization problem can be used to solve an instance of the multidimensional knapsack problem formulated in $m$ dimensions by either extending it to $m+1$ dimensions or using the same number of dimensions, depending on the assumptions we make.

\paragraph{Revenue Maximization.} The Revenue Maximization (RM) problem asks for a subset of transactions that maximize revenue from tips, subject to resource constraints. The problem is parameterized by: (i) the number of dimensions $m$, (ii) the vector of resource prices $r = (r_1, r_2, \ldots, r_m)$, and (iii) a 
%class of tipping strategies, each defined as a 
tipping strategy  $f$ that users follow~\footnote{We assume that all parties follow the same strategy, i.e., the best one. Note, that our result holds also when users follow different tipping strategies, as the related problem is as at least hard as the one presented here which is a special case.}. The input consists of 
the vector of resource upper bounds $(G_1, G_2, \ldots, G_m)$, the number of transactions $n$ together with their values and  $m$-dimensional consumption vectors, denoted by $v_j$ and $c^j$, respectively;
the tipping strategy $f$ can depend on $v_j$, $r$, and $c^j$.

The objective is to find a subset $S \subseteq \{1, \ldots, n\}$ of transactions that maximizes the total revenue from tips:
$$
\text{maximize } \sum_{tx_j \in S} f(v_j, r, c^j)
$$
subject to the constraints:
$$\sum_{tx_j \in S} c_i^j \leq G_i, \quad \forall i \in [m]$$
%where $f_j$ is the tipping function followed for $tx_j$.

We are going to show that RM is hard for a large class of natural tipping strategies.

\paragraph{Multidimensional Knapsack.} The Multidimensional Knapsack (MDK) problem is parameterized by the number of dimensions $m$, and takes as input  a vector $(G_1, G_2, \ldots, G_m)$ of $m$ bounds, and $n$ items, where each item $item_j$ has an $m$-dimensional vector $c^j $ and an associated value $v_j$. The objective is to find a subset $S \subseteq \{1, \ldots, n\}$ that: 
$$\text{maximizes }\sum_{item_j \in S} v_j$$
subject to the constraints:
$$\sum_{item_j \in S} c_i^j \leq G_i, \quad  \forall i \in [m].$$

Note that, RM is similar to MDK, except of the fact that that the value of each transaction is not determined arbitrarily, but is instead a function that depends on its value, resource consumption, and resource prices.

\paragraph{Hardness of the knapsack problem.}The hardness of MDK increases significantly as the number of dimensions $m$ grow. On the one hand, a polynomial-time approximation scheme (PTAS) is known to exist~\cite{caprara2000approximation} with a runtime of $n^{\theta(m/ε)}$. On the other hand, under the \emph{Gap Exponential Time Hypothesis (Gap-ETH)},  there exist constants $\zeta, \chi, m_0 > 0$ such that for every integer $m > m_0$ and every $\epsilon \in \left(0, \frac{\chi}{\log m}\right)$, there is no $(1 - \epsilon)$-approximation algorithm for $m$-dimensional knapsack that runs in time $O\left(n^{\frac{m}{\epsilon} \cdot \frac{\zeta}{(\log(m/\epsilon))^2}}\right)$.
%there is no algorithm can achieve a $(1 -\epsilon )-$ approximation within runtime $n^{o(m/\epsilon \cdot \frac{1}{(\log(m/\epsilon))^2})}$ for every $\epsilon = O(\frac{1}{\log m})$~\cite{doron2024fine}.

\begin{theorem}[\cite{doron2024fine}]\label{lemma:MDK}
Under the Gap-ETH assumption, there exist constants $\zeta, \chi, m_0 > 0$ such that for every integer $m > m_0$ and every $\epsilon \in \left(0, \frac{\chi}{\log m}\right)$, there is no $(1 - \epsilon)$-approximation algorithm for $m$-dimensional knapsack that runs in time $O\left(n^{\frac{m}{\epsilon} \cdot \frac{\zeta}{(\log(m/\epsilon))^2}}\right)$.
\end{theorem}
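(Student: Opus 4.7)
The plan is to derive this lower bound through a chain of reductions starting from Gap-ETH: from gap Max-$k$-SAT to a gap version of Label Cover (with projection property) or gap Densest-$k$-Subgraph, and finally to multidimensional knapsack, carefully tracking the blow-up of parameters at every step. Gap-ETH says that for some constant $\alpha>0$, distinguishing satisfiable $3$-CNF formulas on $N$ variables from those in which only a $(1-\alpha)$-fraction of clauses can be simultaneously satisfied requires time $2^{\Omega(N)}$. I would route through an intermediate gap problem because the constraint structure of Label Cover---each constraint touching only a small neighborhood---translates cleanly into the independent per-dimension capacity constraints of MDK, which is exactly the structural match we need.

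The second step is to reduce the intermediate gap problem to MDK: each candidate local label becomes an item; each of the $m$ dimensions encodes one constraint, with capacity $G_i$ chosen so that a feasible knapsack solution corresponds exactly to a consistent global labeling; and the value of an item records how many constraints it satisfies, so that the optimal knapsack objective separates the YES case (full satisfiability) from the NO case (only a $(1-\epsilon)$-fraction simultaneously satisfiable). Any $(1-\epsilon)$-approximation algorithm for MDK then decides the gap problem, and a running-time upper bound on such an approximation translates directly into a running-time upper bound on the Gap-ETH-hard source, yielding a contradiction.

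The main obstacle---and the crux of the argument---will be matching the precise exponent $\frac{m}{\epsilon}\cdot\frac{\zeta}{(\log(m/\epsilon))^{2}}$ in the running-time lower bound. This requires a delicate composition/amplification step in which a constant gap coming out of Gap-ETH is boosted to a $(1-\epsilon)$-gap for $\epsilon$ as small as $\chi/\log m$, with the $(\log(m/\epsilon))^{2}$ denominator emerging from equating the two sides of a size-versus-gap trade-off inherent in any product construction. Getting this exponent tight, rather than merely qualitative, is where I expect most of the effort to go; I would look to recent fine-grained inapproximability techniques (product constructions, randomized embedding, and careful sparsification in the style used for parameterized hardness of $k$-Clique and $k$-Set-Cover) to translate a weaker qualitative reduction into the quantitative bound stated in the theorem.
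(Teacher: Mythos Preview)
The paper does not prove this statement at all: it is quoted verbatim as a result of \cite{doron2024fine} and used as a black box in the reduction establishing Theorem~\ref{thm:values}. There is therefore no ``paper's own proof'' to compare your proposal against.

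Your sketch is a plausible outline of how such fine-grained lower bounds are obtained in the source paper (Gap-ETH $\to$ gap Label Cover / parameterized intermediate problem $\to$ MDK, with amplification to push the gap down to $\epsilon$ and careful bookkeeping of the exponent), and you correctly identify that the hard part is matching the exact $\frac{m}{\epsilon}\cdot\frac{\zeta}{(\log(m/\epsilon))^2}$ exponent rather than a merely qualitative bound. But none of that work belongs in this paper: for the purposes of the present manuscript, the right ``proof'' is simply a citation. If you were asked to supply a proof here, the appropriate response is to note that the result is imported from \cite{doron2024fine} and is not re-derived.
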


In contrast, for the one-dimensional knapsack problem, the best known PTAS gives us a $(1-\epsilon)$ approximation in time $\tilde{O}(n + 1/\epsilon^2)$ ~\cite{chen2024nearly}.

\subsection{Hardness of RM with monotone tipping}

We are now ready to prove our result.
% for two large tipping function classes. The first one has to do with the tip increasing when the value of the transaction increases, while the second one has to do with the tip increasing while resource consumption increases. We proceed to present our results in detail.
%In our reduction from the Multidimensional Knapsack (MDK) problem to the Revenue Maximization (RM) problem, we assume that
%\paragraph{Value monotonicity.}
The class of tipping strategies we are going to consider, is tipping functions $f(v, r, c)$  that are \emph{monotonically increasing} with respect to the transaction value $v$. 
%This means that a user's tip does not decrease as their valuation of the transaction increases. 
This assumption reflects practical user behavior in transaction fee markets: A higher transaction value $v$ indicates that the transaction is more important or urgent for the user. Consequently, the user is willing to offer a higher tip to incentivize the block producer to include their transaction in the block. By increasing the tip, the user effectively increases the priority of their transaction over others, aligning with rational strategies in competitive fee environments.

Besides monotonicity we are going to need a few extra assumptions on $f$.
Firstly, we are going to assume that the tip is zero when the value of the transaction to the user is zero, and that otherwise the tip is positive. Furthermore, to allow relating the values of items in MDK to tips in RM in the reduction, we restrict our tipping class to functions that are continuous and can be inverted in polynomial time.
We proceed to formally define the  tipping class described above.

%Further, we are going to assume that $\cal FP$
%As indicated by Lemma~\ref{lemma:MDK}, our reductions from MDK to RM are going to be fine-grained. For technical reasons, in the reductions we are going to try to set the value of the tipping function so that it reflects the costs of the items in  MDK. Given that we assume a generalized tipping function, for the reduction to go through we need to assume that finding a preimage of the tipping function is polynomial time.
% in the fine-grained sense. 
%We formalize this assumption by saying that the tipping function belongs to the complexity 
%class which is a straight-forward extension of  
%$\cal FP$.

\begin{definition}
Let $F_v$ denote the family of tipping functions where  every $f(v,r,c) \in F_v$ is:
\begin{itemize}
	\item  monotonically increasing with respect to the transaction value $v$;
	\item  continuous;
	\item can be inverted in polynomial time;
	%\item bounded, i.e., there exists $M\in \mathbb{R}_{>0}$ such that $0 \leq f(v, r, c) \leq M$;
	%\item where for every $r,c$: $f(0,r,c) = 0$ and  $\lim\limits_{v\rightarrow \infty} f(v,r,c) = M$, and

	%$f^{-1}$ is in ${\cal FP}$; 
	\item and where for every $r,c$: $f(v,r,c)>0$, except for $f(0,r,c)$ which is $0$.
\end{itemize}  
\end{definition}

% 
%Additionally, each function $f$ depends on the vector of resource prices $r = (r_1, r_2, \ldots, r_m)$ and the resource consumption vector $c = (c_1, c_2, \ldots, c_m)$ associated with the transaction. The result of each function $f$ lies in the interval $[0, M]$, i.e., $0 \leq f(v, r, c) \leq M,$ where $M$ is a fixed positive constant. Furthermore, for every $f \in F$, there exists at least one input $(v, r, c)$ such that $f(v, r, c) = M$.

%Thus, $f(v, r, c)$ captures the tip or revenue generated from a transaction based on its intrinsic value $v$, resource prices $r$, and resource requirements $c$.

We are now ready to prove that RM is as hard as MDK.
\ifFullElse{}{Due to space constrains, the full proof appears in Appendix~\ref{app:proofs}.}

\ignore{
\begin{lemma}\label{lemm:values}
Under the Gap-ETH assumption,  
there is no algorithm that can achieve a $(1 - \epsilon)$-approximation for the RM  problem, parameterized by any $m\in \mathbb{N}, r \in \mathbb{R}^m_{\geq 0}$, and  $f \in F_v$, 
 in time $n^{o\left(m / \epsilon \cdot \frac{1}{(\log (m/\epsilon))^2}\right)}$ for every $\epsilon = O\left(\frac{1}{\log m}\right)$.
\end{lemma}  
}
\begin{theorem}\label{thm:values}
	Under the Gap-ETH assumption, there exist constants $\zeta, \chi, m_0 > 0$ such that for every integer $m > m_0$ and every $\epsilon \in \left(0, \frac{\chi}{\log m}\right)$, there is no $(1 - \epsilon)$-approximation algorithm for the RM problem, parameterized by  $m$, $r \in \mathbb{R}^m_{\geq 0}$, and $f \in F_v$, that runs in time $O\left(n^{\frac{m}{\epsilon} \cdot \frac{\zeta}{(\log(m/\epsilon))^2}}\right)$.
\end{theorem}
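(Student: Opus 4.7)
The plan is to reduce the multi-dimensional knapsack (MDK) problem to the revenue maximization (RM) problem so that the fine-grained hardness of Theorem~\ref{lemma:MDK} transfers. Fix any $m > m_0$, any $r \in \mathbb{R}^m_{\geq 0}$, and any $f \in F_v$. I would argue that a sufficiently fast $(1-\epsilon)$-approximation algorithm for RM with parameters $(m,r,f)$ would yield one for MDK with the same $m$ and $\epsilon$, contradicting the Gap-ETH lower bound.

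Given an MDK instance with items $(v_j, c^j)_{j \in [n]}$ and capacities $(G_i)_{i \in [m]}$, I construct an RM instance with the same number of transactions, the same consumption vectors $c^j$, and the same capacities $G_i$; the only nontrivial step is picking the transaction values $v_j'$. The key idea is to exploit invertibility of $f$. Since, for each fixed $r$ and $c^j$, the function $f(\cdot,r,c^j)$ is continuous, strictly monotone, vanishes at $0$, and is strictly positive elsewhere, it is a bijection from $[0,\infty)$ onto its image $[0,L_j)$. I set
$$v_j' := f^{-1}\!\left(v_j;\, r,\, c^j\right),$$
so that the tip of transaction $j$ is exactly $v_j$. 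By the definition of $F_v$ this inverse is computable in polynomial time. Then for every subset $S \subseteq [n]$ the two objectives coincide, $\sum_{j \in S} f(v_j',r,c^j) = \sum_{j \in S} v_j$, while the feasibility constraints are identical; hence any $(1-\epsilon)$-approximation for RM is a $(1-\epsilon)$-approximation for MDK.

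The main obstacle is the possibility that $L_j < \infty$ for some $j$, in which case large MDK values have no preimage under $f(\cdot,r,c^j)$. I would handle this via a uniform rescaling: letting $L := \min_j L_j$ and $v_{\max} := \max_j v_j$, replace each $v_j$ by $\lambda v_j$ for any positive rational $\lambda < L/v_{\max}$. A uniform rescaling of all item values preserves every multiplicative approximation ratio of the MDK instance and so the Gap-ETH hardness of Theorem~\ref{lemma:MDK} still applies to the rescaled instance. The rescaled values now lie strictly inside every $[0,L_j)$ and therefore admit preimages under $f$; moreover a standard precision argument shows that $v_j'$ can be represented with a polynomial number of bits, since inverting $f$ to within a sufficiently small additive error preserves the approximation ratio after a further constant blow-up in $\epsilon$.

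Putting everything together, an algorithm $(1-\epsilon)$-approximating RM in time $O\!\left(n^{\frac{m}{\epsilon}\cdot\frac{\zeta}{(\log(m/\epsilon))^2}}\right)$ would, after the polynomial-time reduction above, yield a $(1-\epsilon)$-approximation algorithm for MDK in essentially the same time, contradicting Gap-ETH. The constants $\zeta, \chi, m_0$ in the RM statement can be taken to be those from Theorem~\ref{lemma:MDK}, with the polynomial reduction overhead absorbed into the big-$O$ in the exponent.
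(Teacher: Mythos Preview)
Your proposal is correct and follows essentially the same approach as the paper: reduce MDK to RM by keeping the consumption vectors and capacities unchanged, rescale the MDK values by a small positive constant so they fall in the range of each $f(\cdot,r,c^j)$, and then use monotonicity, continuity, and efficient invertibility of $f$ to choose transaction values whose tips equal the (scaled) MDK values. The paper isolates the ``scaling preserves optimality'' fact as a separate claim, whereas you note directly that uniform rescaling preserves approximation ratios; otherwise the arguments coincide.
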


\ifFullElse{
	\begin{proof}
	We start by proving one useful preliminary claim regarding 
	the invariance of optimal solutions in MDK when the value of items is scaled by a constant.
	
	\begin{claim}\label{claim:value-scaled}
		Let $S \subseteq \{1, 2, \ldots, n\}$ be the optimal subset of items for a multidimensional knapsack (MDK) problem with $n$ items, where each item $item_j \in [n]$ has a value $v_j$. If we scale the values by a constant $C > 0$, so that each item $item_j$ has value $C \cdot v_j$, then the optimal solution remains the same subset $S$.
	\end{claim}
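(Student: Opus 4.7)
The plan is very short because the claim is a standard monotonicity observation about knapsack-type problems. First I would separate what the scaling touches from what it does not. The feasibility of a subset $T \subseteq \{1,\ldots,n\}$ in MDK is governed solely by the resource constraints $\sum_{item_j \in T} c_i^j \leq G_i$ for each $i \in [m]$, which involve only the consumption vectors $c^j$ and the capacities $G_i$. The values $v_j$ play no role in defining feasibility. Hence the collection of feasible subsets in the scaled instance (with values $C \cdot v_j$) is \emph{identical} to that of the original instance.

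Next I would observe that on this common feasible region the two objectives are positive scalar multiples of one another: for any feasible $T$,
\[
\sum_{item_j \in T} (C \cdot v_j) \;=\; C \cdot \sum_{item_j \in T} v_j.
\]
Because $C > 0$, the map $x \mapsto C \cdot x$ is strictly increasing on $\mathbb{R}$, so for any two feasible subsets $T_1, T_2$ we have $\sum_{j \in T_1} v_j \geq \sum_{j \in T_2} v_j$ if and only if $C \sum_{j \in T_1} v_j \geq C \sum_{j \in T_2} v_j$. In particular, the set of maximizers is preserved.

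Therefore, if $S$ is optimal in the original MDK instance then $S$ is also optimal in the scaled instance (and, symmetrically, any optimum of the scaled instance is an optimum of the original). There is no real obstacle: the argument hinges entirely on the fact that the feasible region is value-independent, together with the order-preservation of multiplication by a positive constant. This yields the claim in the form needed subsequently, namely that rescaling tip values by a fixed factor in the reduction preserves the identity of the optimal subset.
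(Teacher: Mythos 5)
Your proof is correct and uses essentially the same argument as the paper: feasibility is value-independent, and multiplying the objective by a positive constant preserves the ordering of feasible subsets (the paper phrases this last step as a contradiction, you phrase it as order-preservation, but the content is identical). No gaps.
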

	\begin{claimproof}
		Consider two MDK instances with the same set of items, identical resource consumption vectors $c^j$ for each item, and identical resource limits $(G_1, G_2, \ldots, G_m)$:
		\begin{enumerate}
			\item Original MDK Instance: Each $item_j$ has value $v_j$.
			\item Scaled MDK Instance: Each $item_j$ has value $C \cdot v_j$, where $C > 0$.
		\end{enumerate}
		
		Let $S \subseteq \{1, 2, \ldots, n\}$ be the optimal solution to the original instance with total value $\sum_{item_j \in S} v_j = \text{OPT}$. We show that $S$ is also optimal for the scaled instance.
		
		For any subset $S' \subseteq \{1, 2, \ldots, n\}$, the objective value in the scaled instance is $\sum_{item_j \in S'} C \cdot v_j =C \cdot \sum_{item_j \in S'} v_j$. Since resource constraints remain identical in both instances, $S$ is feasible in the scaled instance. Suppose, for contradiction, that some subset $S'$ is optimal for the scaled instance instead of $S$. Then:
		$$C \cdot \sum_{item_j \in S'} v_j > C \cdot \sum_{item_j \in S} v_j$$
		
		Dividing by $C$ yields:
		$$\sum_{item_j \in S'} v_j > \sum_{item_j \in S} v_j = \text{OPT}$$
		
		This contradicts the optimality of $S$ in the original instance. Therefore, $S$ remains optimal in the scaled instance.
	\end{claimproof}
	
	For the sake of contradiction, assume there exists some configuration of parameters--specifically, a vector of prices $r$ and a function $f$--such that the RM problem is computationally feasible, meaning
	for all $\zeta,  \chi$, there exist $m$, $\epsilon 
	\in \left(0, \frac{\chi}{\log m}\right)$, $r \in \mathbb{R}^m_{>0}$, $f\in F_v$ such that RM parameterized by $m$, $r$, and $f$ can be approximated to within $(1-ε)$-accuracy in time $O\left(n^{\frac{m}{\epsilon} \cdot \frac{\zeta}{(\log(m/\epsilon))^2}}\right)$. 
	%for every $\epsilon \in \left(0, \frac{\chi}{\log m}\right)$. 
	Through a reduction from the MDK problem, we will demonstrate that the RM problem is at least as hard as the MDK problem, implying that for any choice of these parameters, the RM problem remains computationally hard.

	Given an MDK instance with $n$ items, each with resource consumption vector $c^j$, value $v_j$, and resource bounds $(G_1,G_2, \ldots, G_m)$, we construct an equivalent RM instance with same number of transactions as the number of items, i.e., $n'=n$, and with the same number of resource dimensions, i.e., $m'=m$. For each transaction $tx_j$, we set the resource consumption requirements $c'^j_i$ to match those of the corresponding item $item_j$, i.e. $c'^j_i= c_i^j$ for all $i \in [m]$. Additionally, we set the resource limits in RM to be the same as in MDK, i.e., $G_i' = G_i$ for all $i \in [m]$.
	
	Next, we set the transaction value $v'_j$ in RM so that the function $f(v'_j, r, c'^j_i)$ returns exactly $C\cdot v_j$, where $C > 0$ is sufficiently small to ensure that each $C\cdot v_j$ falls within the range $[0,M]$ - there is such $M$ since  the $f$ is positive for positive values $C \cdot v_j$.  Since $f$ is monotone and continuous with respect to the transaction value $v$, we can always select an appropriate $v'_j$ such that $f(v'_j, r, c'^j_i) = C\cdot v_j$.
	Furthermore, $f$ is efficiently invertible, with its runtime being independent of $n$ (as it is not part of the input). Given that for the reduction $m$ is fixed , for large enough $n$ it holds that computing 
	the inverse of $f$ is within $O\left(n^{\frac{m}{\epsilon} \cdot \frac{\zeta}{(\log(m/\epsilon))^2} -1}\right)$.~\footnote{Note, that we need to invert $f$ $n$ times in the reduction.}
	This further implies that the running time of the reduction will also be  within $O\left(n^{\frac{m}{\epsilon} \cdot \frac{\zeta}{(\log(m/\epsilon))^2}}\right)$ as required.

	This construction ensures that the RM and MDK instances have the same feasible region (identical constraints) and proportional objective functions,  and thus the optimal solution of RM can be easily transformed to an optimal solution for MDK. Specifically, for any subset $S \subseteq \{1, 2, \ldots, n\}$, the RM objective value is $$\sum_{tx_j \in S} f(v'_j, r, c'^j) = \sum_{tx_j \in S} C \cdot v_j = C \cdot \sum_{tx_j \in S} v_j$$
	which is exactly $C$ times the MDK objective value for the same subset. By Claim~\ref{claim:value-scaled}, scaling objective values by a positive constant preserves the optimal solution. Therefore, the optimal subset for the RM instance is identical to the optimal subset for the MDK instance, which contradicts Theorem~\ref{lemma:MDK}.
	The theorem follows.
	% Therefore, due to Lemma~\ref{lemma:MDK}, the RM problem inherits MDK's hardness bound.
\end{proof}

}{
\begin{proof}[Sketch.]
The proof uses a reduction from the multidimensional knapsack (MDK) problem to show that RM is computationally hard. 
For the sake of contradiction, we first assume that a ``fast'' RM solver exists.
Then, given any MDK instance, we construct an equivalent RM instance with the same resource consumption vectors and bounds, 
and an appropriately scaled objective function. 
More specifically, we set transaction values so that the tip is proportional to the value of the item,
by making use of the monotonicity and continuity of the tipping function.
%We do this in the desired time because the tipping strategy belongs to a specific class. 
Combining this with the fact that scaling knapsack objective values by a positive constant keeps the optimal solution the same, 
we show that we can solve MDK instances faster than the time Theorem~\ref{lemma:MDK} allows for.
\end{proof}
}

\begin{remark}
The same reduction as above can be applied for the single-dimension RM problem. Given that as discussed earlier the one-dimensional knapsack problem can be solved in linear time,
it is implied that the single-dimension RM problem can also be solved in linear time, and is thus easy.
\end{remark}

\section{Conclusions and Open Questions}\label{sec:proposals}

Our results in the previous sections indicate that it may challenging to 
use $\md$ in practice. In this section, we describe possible \emph{practical} alternatives
that attempt to balance the steady/transient state performance.
We only comment on the suggested mechanisms at a high-level, and leave a full exploration as an interesting open question.

\paragraph{Slowly-changing gas weights.}
Our first proposal, is for an one-dimensional mechanism with adaptive resource weights $(w_i)_{i\in[m]}$ that are slowly changing based on the level of usage of each resource on chain. We assume the change rate to be a lot slower than the typical stabilization time, e.g., once per day.

On the one hand, the performance of this mechanism in the transient-state remains more or less the same as that of $\oned$.
Revenue maximization is still easy and the mechanism enjoys the
same stability properties as  $\oned$.
On the other hand, by dynamically adjusting the weights of the mechanism in a clever way, we may be able to accommodate higher value traffic, and thus increase the total welfare generated. 

As an example, if higher value traffic consumes a specific resource, we can adjust the weights to increase the its maximum total consumption. The next lemma shows the exact relation of the maximum total consumption and the weights.

\begin{lemma}\label{lemm:max}
	The maximum total consumption of resource $k$ in $\oned$ is:
	$$\min_{i \in [m]}\{w_iG_i\} / w_k, \quad \forall k\in[m]$$
\end{lemma}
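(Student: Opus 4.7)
The plan is to prove the lemma in two directions: an upper bound coming from the global gas constraint, and a matching lower bound exhibited by an explicit single-transaction block. First I will set up the framing by noting that by Corollary~\ref{cor:safe}, safety forces $G \leq w_i G_i$ for every $i \in [m]$, and the tightest such choice consistent with maximum block utilization is $G = \min_{i \in [m]} w_i G_i$; this is the value implicitly assumed in the statement, since the lemma asks about the maximum consumption attainable under the one-dimensional mechanism.

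For the upper bound, I would use the block validity constraint from Equation~\ref{eq:block_size_constraint}: for any valid block $B$, one has $\sum_{tx_j \in B} g_j \leq G$, and expanding $g_j = \sum_{i \in [m]} w_i c_i^j$ via Definition~\ref{def:gas_consumption_1d} gives
\[
w_k \sum_{tx_j \in B} c_k^j \;\leq\; \sum_{tx_j \in B} \sum_{i \in [m]} w_i c_i^j \;\leq\; G,
\]
since all weights and consumptions are nonnegative. Dividing by $w_k > 0$ yields $\sum_{tx_j \in B} c_k^j \leq G/w_k = \min_{i \in [m]} \{w_i G_i\} / w_k$.

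For the matching lower bound, I would exhibit a single transaction $tx$ with $c_k = G/w_k$ and $c_i = 0$ for all $i \neq k$. Its gas consumption is $g = w_k \cdot (G/w_k) = G$, so it satisfies the block gas bound with equality; and the per-resource safety bound is satisfied because $G \leq w_k G_k$ implies $c_k = G/w_k \leq G_k$, while the other resources trivially satisfy $0 \leq G_i$. This single-transaction block thus attains total consumption of resource $k$ equal to $G/w_k = \min_{i \in [m]} \{w_i G_i\}/w_k$.

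Combining the two directions gives the claimed equality. The only subtlety I anticipate is a clarification step rather than a technical obstacle: making explicit why $G$ should be read as $\min_{i \in [m]} w_i G_i$ in the statement (it is the unique value for which the bound is tight, any smaller $G$ being suboptimal for the protocol designer under Corollary~\ref{cor:safe}), and justifying that nothing in $\oned$ prevents constructing a transaction with the extreme consumption profile used in the achievability direction.
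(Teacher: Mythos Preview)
Your proposal is correct and follows essentially the same approach as the paper: bound the consumption of resource $k$ via the gas constraint with $G = \min_i w_i G_i$, and achieve the bound by allocating all gas to resource $k$ alone. Your version is somewhat more explicit in separating the upper and lower bounds and in verifying feasibility of the extremal block, but the underlying argument is identical.
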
	

\ifFull{

\begin{proof}
	We denote by $B^1$ the block generated by $\oned$. In the one-dimensional setting, the total gas consumption is bounded by $G$, which  implies that the total consumption of any single resource $k$ is also bounded. To find the maximum consumption that resource $k$ can achieve, we consider the extreme case where $\oned$ allocates all available capacity to resource $k$ only. In this case:
	$$G = \sum_{tx_j \in B^1}\sum_{i \in [m]}w_ic_i^j = w_k \sum_{tx_j \in B^1} c_k^j$$
	
	Since $G$ must satisfy the safety assumption and thus $G \leq w_i G_i$ for all $i \in [m]$, the maximum possible value of $G$ is equal to $\min_{i \in [m]} \{w_iG_i\}$. Therefore, the maximum total consumption of resource $k$ is achieved when $G$ reaches this upper bound:
	$$w_k \sum_{tx_j \in B^1} c_k^j = \min_{i \in [m]} \{w_iG_i\}$$
	
	Solving for the consumption of resource $k$:
	$$\sum_{tx_j \in B^1} c_k^j =\frac{ \min_{i \in [m]} \{w_iG_i\}}{w_k}$$
	
	The lemma follows.
\end{proof}

}

\paragraph{Small number of dimensions.}
Our second proposal, is using only a small number (2-3) of \emph{synthetic} dimensions~\footnote{Interestingly, in concurrent work~\cite{lavee2025does}, limiting the number of dimensions is also explored as a possible solution to a related problem, i.e., that of limited ``resource space'' of $\oned$ compared to $\md$. Here, we also note that this mechanism also performs better than $\md$ when under congestion and w.r.t. stabilization time.}. Each dimension would correspond to a synthetic (gas-like)
resource that (potentially) depends on the $m$ real resources and has a freely determined price based on the usage of the resource.

Our transient-state analysis, shows that for very small $m$ the performance may be at acceptable levels, e.g., solving the related RM problem takes quadratic time on the number of inputs, and the stability time is only lightly affected when $m$ is small. On the other hand, 
increasing the number of dimensions, increases the number of allocations that are possible, and thus provides an opportunity for increased welfare.

\medskip 

Concluding, our analysis suggests that multidimensional pricing may introduce more problems than it solves.
However, slowly-changing parameters or a limited number of dimensions may offer an alternative way of improving the performance of the single dimension mechanism.

\bibliographystyle{splncs04}
\bibliography{tspeed}

\begin{thebibliography}{10}
\providecommand{\url}[1]{\texttt{#1}}
\providecommand{\urlprefix}{URL }
\providecommand{\doi}[1]{https://doi.org/#1}

\bibitem{angeris2024multidimensional}
Angeris, G., Diamandis, T., Moallemi, C.: Multidimensional blockchain fees are
  (essentially) optimal. arXiv preprint arXiv:2402.08661  (2024)

\bibitem{buterinmulti}
Buterin, V.: Multidimensional gas pricing (2024),
  \url{https://vitalik.eth.limo/general/2024/05/09/multidim.html}

\bibitem{buterinethereum}
Buterin, V., Conner, E., Dudley, R., Slipper, M., Norden, I.: Ethereum
  improvement proposal 1559: Fee market change for eth 1.0 chain (2019),
  \url{https://github.com/ethereum/EIPs/blob/master/EIPS/eip-1559.md}

\bibitem{EIP-4844}
Buterin, V., Feist, D., Loerakker, D., Kadianakis, G., Garnett, M., Taiwo, M.,
  Dietrichs, A.: Multidimensional gas pricing (2024),
  \url{https://github.com/ethereum/EIPs/blob/master/EIPS/eip-4844.md}

\bibitem{caprara2000approximation}
Caprara, A., Kellerer, H., Pferschy, U., Pisinger, D.: Approximation algorithms
  for knapsack problems with cardinality constraints. European Journal of
  Operational Research  \textbf{123}(2),  333--345 (2000)

\bibitem{chen2024nearly}
Chen, L., Lian, J., Mao, Y., Zhang, G.: A nearly quadratic-time fptas for
  knapsack. In: Proceedings of the 56th Annual ACM Symposium on Theory of
  Computing. pp. 283--294 (2024)

\bibitem{crapis2024optimal}
Crapis, D., Moallemi, C.C., Wang, S.: Optimal dynamic fees for blockchain
  resources. In: International Conference on Financial Cryptography and Data
  Security. pp. 271--291. Springer (2024)

\bibitem{diamandis2023designing}
Diamandis, T., Evans, A., Chitra, T., Angeris, G.: Designing multidimensional
  blockchain fee markets. In: 5th Conference on Advances in Financial
  Technologies (AFT 2023). pp.~4--1. Schloss Dagstuhl--Leibniz-Zentrum f{\"u}r
  Informatik (2023)

\bibitem{doron2024fine}
Doron-Arad, I., Kulik, A., Manurangsi, P.: Fine grained lower bounds for
  multidimensional knapsack. arXiv preprint arXiv:2407.10146  (2024)

\bibitem{heimbach2025early}
Heimbach, L., Milionis, J.: The early days of the ethereum blob fee market and
  lessons learnt. arXiv preprint arXiv:2502.12966  (2025)

\bibitem{lavee2025does}
Lavee, N., Nisan, N., Pai, M., Resnick, M.: Does your blockchain need
  multidimensional transaction fees? arXiv preprint arXiv:2504.15438  (2025)

\bibitem{EIP-7825}
Rebuffo, G.: Cap on the maximum gas usage of a transaction (2024),
  \url{https://github.com/ethereum/EIPs/blob/master/EIPS/eip-7825.md}

\bibitem{roughgarden2024transaction}
Roughgarden, T.: Transaction fee mechanism design. Journal of the ACM
  \textbf{71}(4),  1--25 (2024)

\end{thebibliography}

%\appendix

%\input{appendix}
%\input{wel_bounds}

%\input{wel_bounds}
%\input{welfare2}
%\input{congestion}
%\input{price_stability_complexity}
%\input{monotoneconsumption}

\end{document}

\typeout{get arXiv to do 4 passes: Label(s) may have changed. Rerun}